\newtheorem{definition}{Definition}
\DeclareMathOperator*{\argmax}{arg\,max}
\newcommand{\N}{\mathcal{N}}
\newcommand{\M}{\mathcal{M}}
\newcommand{\V}{\mathcal{V}}
\newcommand{\sset}{\mathcal{S}}
\newcommand*\circled[1]{\tikz[baseline=(char.base)]{
		\node[shape=circle,draw,inner sep=1pt] (char) {\scriptsize #1};}}
\title{Public Signaling in Bayesian Ad Auctions}
\author{Francesco Bacchiocchi$^{\dagger}$, Matteo Castiglioni$^{\S}$, Alberto Marchesi$^{\S}$, Giulia Romano$^{\S}$, Nicola Gatti$^{\S}$ \\
	Politecnico di Milano, Piazza Leonardo da Vinci 32, I-20133, Milan, Italy \\
	 $^{\dagger}$\texttt{francesco.bacchiocchi@mail.polimi.it}, $^{\S}$\texttt{\{name.surname\}@polimi.it} }
\begin{document}
\maketitle
\begin{abstract}
  We study \emph{signaling} in \emph{Bayesian ad auctions}, in which bidders' valuations depend on a random, unknown state of nature.
  The auction mechanism has complete knowledge of the actual state of nature, and it can send signals to bidders so as to disclose information about the state and increase revenue.
  For instance, a state may collectively encode some features of the user that are known to the mechanism only, since the latter has access to data sources unaccessible to the bidders.
  We study the problem of computing \emph{how the mechanism should send signals to bidders in order to maximize revenue}.
  While this problem has already been addressed in the easier setting of second-price auctions, to the best of our knowledge, our work is the first to explore ad auctions with more than one slot.
  In this paper, we focus on \emph{public} signaling and VCG mechanisms, under which bidders truthfully report their valuations.
  We start with a negative result, showing that, in general, the problem does \emph{not} admit a PTAS unless $\mathsf{P} = \mathsf{NP}$, even when bidders' valuations are known to the mechanism.
  The rest of the paper is devoted to settings in which such negative result can be circumvented.
  First, we prove that, with \emph{known valuations}, the problem can indeed be solved in polynomial time when either the number of states $d$ or the number of slots $m$ is fixed.
  Moreover, in the same setting, we provide an FPTAS for the case in which bidders are \emph{single minded}, but $d$ and $m$ can be arbitrary.
  Then, we switch to the \emph{random valuations} setting, in which these are randomly drawn according to some probability distribution.
  In this case, we show that the problem admits an FPTAS, a PTAS, and a QPTAS, when, respectively, $d$ is fixed, $m$ is fixed, and bidders' valuations are bounded away from zero. 
\end{abstract}

\section{Introduction}\label{sec:intro}

Nowadays, worldwide spending in digital advertising is skyrocketing, and this growth is primarily driven by \emph{ad auctions}.
These account for almost all market share, since they are at the core of popular advertising platforms, such as, \emph{e.g.}, those by Google, Amazon, and Facebook.
According to a recent report by~\citeauthor{eMarketer}~[\citeyear{eMarketer}], digital ad spending will reach over \$490 billion in 2021 and zoom past half a trillion in 2022.

We study \emph{signaling} in ad auction settings by means of the \emph{Bayesian persuasion} framework~\cite{kamenica2011bayesian}.
Over the last years, this framework has received considerable attention from the computer science community, due to its applicability to many real-world scenarios, such as, \emph{e.g.}, online advertising~\cite{bro2012send,emek2014signaling,badanidiyuru2018targeting}, voting~\cite{alonso2016persuading,cheng2015mixture,castiglioni2019persuading,semipublic}, traffic routing~\cite{vasserman2015implementing,bhaskar2016hardness,castiglioni2020signaling}, recommendation systems~\cite{mansour2016bayesian}, security~\cite{rabinovich2015information,xu2016signaling}, and product marketing~\cite{babichenko2017algorithmic,candogan2019persuasion}.

In a standard ad auction, the advertisers (also called bidders) compete for displaying their ads on a limited number of slots, and each bidder has their own private valuation representing how much they value a click on their ad. 
In this work, we study \emph{Bayesian ad auctions}, which are characterized by the fact that bidders' valuations depend on a random, unknown state of nature.
The auction mechanism has complete knowledge of the actual state of nature, and it can send signals to bidders so as to disclose information about the state and increase revenue.
In particular, the auction mechanism \emph{commits to a signaling scheme}, which is defined as a randomized mapping from states of nature to signals being sent to the bidders.
Our model fits many real-world applications that are \emph{not} captured by classical ad auctions.
For instance, a state of nature may collectively encode some features of the user visualizing the ads---such as, \emph{e.g.}, age, gender, or geographical region---that are known to the mechanism only, since the latter has access to data sources unaccessible to the bidders.

We study the problem of computing \emph{a revenue-maximizing signaling scheme for the mechanism}.
In particular, in this paper we focus on \emph{public} signaling, in which the mechanism can only send a single signal that is observed by all the bidders.
Moreover, we restrict our attention to VCG mechanisms, which are widely used in practice and have the appealing property of inducing bidders to truthfully report their valuations.
While the signaling problem studied in this paper has already been addressed in the easier setting of second-price auctions~\cite{badanidiyuru2018targeting}, to the best of our knowledge, our work is the first to explore algorithmic signaling in general ad auctions with more than one slot.

\subsection{Original Contributions}

We start our analysis with a negative result, showing that, in general, the revenue-maximizing problem with public signaling does \emph{not} admit a PTAS unless $\mathsf{P} = \mathsf{NP}$, even when bidders' valuations are known to the mechanism.
Thus, in the rest of the paper, we address settings in which we can prove that such a negative result can be circumvented.

First, we show that, in the \emph{known valuations} setting, the problem admits a polynomial-time algorithm when either the number of slots $m$ or the number of states $d$ is fixed.
The proposed algorithms work by solving suitably-defined \emph{linear programs} (LPs) of polynomial size, thanks to the crucial property that, when either $m$ or $d$ is fixed, there always exists an optimal signaling scheme using a polynomial number of different signals. 
Moreover, we also study special instances in which the bidders are \emph{single minded}, but $m$ and $d$ can be arbitrary.
In this case, each bidder positively values a click on their ad only when the actual state of nature is a specific (single) state, and all the bidders interested in the same state value a click on their ad for the same amount.
By exploiting a particular combinatorial structure of the set of bidders' posterior distributions induced by signaling schemes, we are able to provide an FPTAS in such setting.
The algorithm works by applying the ellipsoid method in a non-trivial way, with only access to an approximate polynomial-time separation oracle.
The latter is implemented by a rather involved dynamic programming algorithm, which works thanks to the particular structure of the set of bidders' posteriors.

Then, we switch the attention to the \emph{random valuations} setting, where bidders' valuations are unknown to the mechanism, but randomly drawn according to some probability distribution.
In this case, we first provide some preliminary results that establish useful connections between the optimal value of the revenue-maximizing problem and that of optimal signaling schemes restricted to suitably-defined finite sets of posterior distributions.
These sets are defined so that the expected revenue of the mechanism is \emph{``stable''}, meaning that it does \emph{not} decrease too much when restricting signaling schemes to use posteriors in such sets.
In particular, for our results we use sets of $q$-uniform posteriors, for suitable values of $q$.
As a preliminary step, we also show that it is possible to compute an approximately-optimal signaling scheme having only access to a finite number of samples from the distribution of bidders' valuations.
In conclusion, all the preliminary results described so far allow us to prove that, in the random valuations setting, the problem admits an FPTAS, a PTAS, and a QPTAS, when, respectively, $d$ is fixed, $m$ is fixed, and bidders' valuations are bounded away from zero.\footnote{All the proofs are in the Supplementary Material.}

\subsection{Related Works}

To the best of our knowledge, the algorithmic study of signaling in auctions is limited to the \emph{second-price auction}, which can be seen as a special ad auction with a single slot.

\citeauthor{emek2014signaling}~[\citeyear{emek2014signaling}] study second-price auctions in the known valuations setting.
They provide an LP to compute an optimal public signaling scheme.
Moreover, they show that it is $\mathsf{NP}$-hard to compute an optimal signaling scheme in the random valuations setting.
In our work, we generalize their positive result, in order to provide our polynomial-time algorithm working when the number of slots $m$ is fixed.

\citeauthor{cheng2015mixture}~[\citeyear{cheng2015mixture}] complement the hardness result of~\cite{emek2014signaling} by providing a PTAS for the random valuations setting.
This result cannot be extended to ad auctions, as we show in our first negative result.
%
However, we provide two generalizations of the result by~\citeauthor{cheng2015mixture}~[\citeyear{cheng2015mixture}]: we provide a PTAS for the random valuations setting with a fixed number of slots $m$, and a QPTAS when the bidder's valuations are bounded away from zero.

Finally, \citeauthor{badanidiyuru2018targeting}~[\citeyear{badanidiyuru2018targeting}] study algorithms whose running time does \emph{not} depend on the number of states of nature.
Moreover, they initiate the study of private signaling schemes, showing that, in second-price auctions, private signaling introduces non-trivial equilibrium selection problems.

\section{Preliminaries}\label{sec:preliminaries}

In a standard \emph{ad auction} (see also the book by~\citeauthor{nisan2001algorithmic}~[\citeyear{nisan2001algorithmic}] for more details), there is a set $\N \coloneqq \{ 1, \ldots, n\}$ of \emph{advertisers} (or \emph{bidders}) who compete for displaying their ads on a set $\M \coloneqq \{ 1, \ldots, m \}$ of \emph{slots}, with $m \leq n$.
Each bidder $i \in \N$ is characterized by a \emph{private valuation} $ v_i \in [0,1]$, which represents how much they value a click on their ad.
%
%
Moreover, each slot $j \in \M$ is associated with a \emph{click through rate} parameter $\lambda_j \in[0,1]$, which is the probability with which the slot is clicked by a user.\footnote{In this work, for the ease of presentation, we assume that the click through rate only depends on the slot and \emph{not} on the ad being displayed. In general, each slot may have its own \emph{prominence} value---the probability with which a user observes it---and each bidder may have their own ad \emph{quality}---the probability with which their ad is clicked once observed---, so that the click through rate is defined as the product of these two quantities. All the results in this paper can be easily extended to such general model.} 
W.l.o.g., we assume that the sots are ordered so that $\lambda_1 \geq \ldots \geq \lambda_m$.
%
The auction goes on as follows: first, each bidder $i \in \N$ separately reports a bid $b_i \in [0,1]$ to the auction mechanism; then, based on the bids, the latter allocates an ad to each slot and defines how much each bidder has to pay the mechanism for a click on their ad.
We focus on \emph{truthful} mechanisms, and the VCG mechanism in particular (see the book by~\citeauthor{mas1995microeconomic}~[\citeyear{mas1995microeconomic}] for a complete description of the mechanism).
%
In truthful mechanisms, allocation and payments are defined so that it is a dominant strategy for each bidder to report their true valuation to the mechanism, namely $b_i =   v_i$ for every $i \in \N$.
In particular, the allocation implemented by the VCG mechanism orderly assigns the first $m$ bidders in decreasing value of $b_i$ to the first $m$ slots (those with the highest click through rates).
At the same time, assuming w.l.o.g. that bidder $i$ is assigned to slot $i$, the mechanism defines an expected payment $p_i \coloneqq \sum_{j=i+1}^{m+1} b_{j}(\lambda_{j-1} -\lambda_{j} )$ for each bidder $i \in \{1, \ldots, m\}$, where, for the ease of notation, we let $\lambda_{m+1} = 0$.
The payment is zero for all the other bidders.
In practice, each bidder $i \in \{1, \ldots, m\}$ has to pay $\frac{p_i}{ \lambda_i}$ whenever a user clicks on their ad, so that their utility is $\lambda_i v_i - p_i$ in expectation over the clicks.
The expected utility of all the other bidders is zero.

\begin{figure}[!htp]
	\centering
	\includegraphics[width=0.8\textwidth]{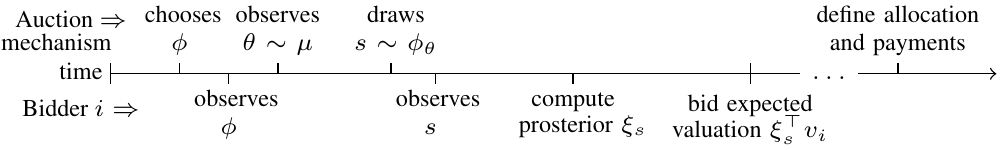}
	\caption{Time-line of a Bayesian ad auction.}
	\label{fig:inter}
\end{figure}

We study \emph{Bayesian} ad auctions, which are characterized by a set $\Theta \coloneqq \{ \theta_1, \ldots,\theta_d \}$ of $d$ states of nature.
Each bidder $i \in \N$ has a valuation vector $ v_i \in [0,1]^d$, with $ v_i(\theta)$ being bidder $i$'s valuation in state $\theta \in \Theta$, and all such vectors are arranged in a matrix of bidders' valuations ${V} \in [0,1]^{n \times d}$, whose entries are defined as ${V}(i,\theta) \coloneqq {v}_i(\theta)$ for all $i \in \mathcal{N}$ and $\theta \in \Theta$.
We model signaling by means of the \emph{Bayesian persuasion} framework~\cite{kamenica2011bayesian}.
We consider the case in which the auction mechanism knows the state of nature and acts as a \emph{sender} by issuing signals to the bidders (the \emph{receivers}), so as to partially disclose information about the state and increase revenue. 
As customary in the literature, we assume that the state is drawn from a common prior distribution $\mu \in \Delta_{\Theta}$, with $\mu_\theta$ denoting the probability of state $\theta \in \Theta$.\footnote{Given a finite set $X$, we denote with $\Delta_X$ the ($|X|-1$)-dimensional simplex defined over the elements of $X$.}
The mechanism publicly \emph{commits to a signaling scheme} $\phi$, which is a randomized mapping from states of nature to signals for the bidders.
We focus on the case of \emph{public signaling} in which all the bidders receive the same signal from the auction mechanism.
Formally, a signaling scheme is a function $\phi:\Theta\to\Delta_{\sset}$, where $\sset $ is a set of available signals.
For the ease of notation, we let $\phi_\theta(s)$ be the probability of sending signal $s\in \sset$ when the state is $\theta \in \Theta$.

A Bayesian ad auction goes on as follows (see Figure~\ref{fig:inter} for a picture): (i) the auction mechanism commits to a signaling scheme $\phi$, and the bidders observe it; (ii) the mechanism gets to know the state of nature $\theta \sim \mu$ and draws signal $s \sim \phi(\theta)$; and (iv) the bidders observe the signal $s$ and rationally update their prior belief over states according to Bayes rule.
After observing signal $s \in \sset$, all the bidders infer a posterior distribution $\xi_{s} \in \Delta_{\Theta}$ over states (also called \emph{posterior} for short) such that the posterior probability of state $\theta \in \Theta$ is
\begin{equation}\label{eq:posterior}
\xi_{s}(\theta) \coloneqq \frac{{\mu}_\theta {\phi}_{\theta}(s) }{\sum_{\theta'\in\Theta}{\mu}_{\theta'} {\phi}_{\theta'}(s) } .
\end{equation}
Finally, each bidder $i \in \N$ truthfully reports to the mechanism their expected valuation given the posterior $\xi_s$, namely $\xi_s^\top v_i  = \sum_{\theta \in \Theta} v_i(\theta) \, \xi_{s}(\theta)$, and the mechanism allocates slots and defines payments as in a standard ad auction.

\paragraph{Representing Signaling Schemes.}
It is oftentimes useful to represent signaling schemes as convex combinations of the posteriors they can induce~\cite{dughmi2014hardness,cheng2015mixture}.
Formally, a signaling scheme $\phi: \Theta \to \Delta_{\sset}$ \emph{induces} a probability distribution $\gamma$ over posteriors in $\Delta_{\Theta}$, with $\gamma(\xi)$ denoting the probability of posterior $\xi \in \Delta_{\Theta}$, defined as
\[
{\gamma}({\xi})  \coloneqq \sum_{s \in \sset:  {\xi}_{s} = \xi } \sum_{\theta \in \Theta} {\mu}_\theta {\phi}_{\theta}(s) .
\]
Indeed, we can directly reason about distributions $\gamma$ over $\Delta_{\Theta}$ rather than about signaling schemes, provided that they are \emph{consistent} with the prior.
By letting $\text{supp}(\gamma) \coloneqq \{\xi \in \Delta_{\Theta} \mid \gamma(\xi) > 0 \}$ be the support of $\gamma$, this requires that
\begin{equation}\label{eq:consistent}
\sum_{\xi \in \text{supp}(\gamma)} \gamma(\xi) \, \xi(\theta)=\mu_\theta \quad \forall \theta \in \Theta.
\end{equation}
In the rest of the paper, we will use the term signaling scheme to refer to a consistent distribution $\gamma$ over $\Delta_{\Theta}$.

\paragraph{Computational Problems.}
We focus on the problem of computing an \emph{optimal} signaling scheme, \emph{i.e.}, one maximizing the revenue of the mechanism.
We study two settings:
\begin{itemize}
	\item the \emph{known valuations} (KV) setting in which the matrix of bidders' valuations $V$ is known to the mechanism; and
	\item the \emph{random valuations} (RV) setting in which the matrix of bidders' valuations $V$ is unknown, but randomly drawn according to a probability distribution $\V$. 
\end{itemize}
As it is customary in the literature (see, \emph{e.g.},~\cite{badanidiyuru2018targeting}), in the RV setting we assume that algorithms have access to a black-box oracle returning i.i.d. samples drawn from $\V$ (rather than actually knowing such distribution).
We denote by $\textsc{Rev}({V},\xi)$ the expected revenue of the mechanism when the bidders' valuations are given by $V$ and the posterior induced by the mechanism is $\xi \in \Delta_{\Theta}$.
Formally, given that bidders truthfully report their expected valuations and assuming w.l.o.g. that bidder $i$ is assigned by the mechanism to slot $i$, we can write $\textsc{Rev}({V},\xi) \coloneqq \sum_{j=1}^{m} j \, \xi^\top v_{j+1}(\lambda_{j} -\lambda_{j+1} )$.
Then, given a signaling scheme $\gamma$, the expected revenue of the mechanism is $\sum_{\xi \in \text{supp}(\gamma)} \gamma(\xi) \, \textsc{Rev}({V},\xi)$.
When the valuations are unknown, we let $\textsc{Rev}(\V,\xi) \coloneqq \mathbb{E}_{V \sim \V} \textsc{Rev}({V},\xi)$ and define the expected revenue analogously.
Notice that, given a distribution of valuations $\V$ (or, in the KV setting, a matrix of bidders' valuations $V$) and a finite set $\Xi \subseteq \Delta_{\Theta}$ of posteriors, it is possible to formulate the problem of computing an optimal signaling scheme as an LP, as follows:\footnote{LP~\ref{eq:lp_signaling} is written for the RV setting, its analogous for the KV setting can be obtained by substituting $\textsc{Rev}(\V,\xi)$ with $\textsc{Rev}(V,\xi)$.}
\begin{subequations}\label{eq:lp_signaling}
	\begin{align}
		\max_{\gamma \in \Delta_{\Xi}} & \,\, \sum_{\xi \in \Xi} \gamma(\xi) \, \textsc{Rev}(\V,\xi) \quad \text{s.t.} \\
		& \sum_{\xi \in \Xi} \gamma(\xi) \, \xi(\theta)=\mu_\theta & \forall \theta \in \Theta. \label{eq:lp_signaling_cons}
	\end{align}
\end{subequations}
In the following, we let $OPT_{\Xi}$ be the optimal value of LP~\ref{eq:lp_signaling}, while we denote with $OPT$ the optimal expected revenue of the mechanism over all the possible signaling schemes $\gamma$.\footnote{The dependence of $OPT_{\Xi}$ and $OPT$ from either $V$ or $\V$ is omitted, as it will be clear from context.}
%

%

\section{A General Inapproximability Result}

We start our analysis with the following negative result:
\begin{restatable}{theorem}{hardnessOne}\label{thm:hardness_ptas}
 The problem of computing an optimal signaling scheme does \textnormal{not} admit a \textnormal{PTAS} unless $\mathsf{P}=\mathsf{NP}$, even when it is restricted to the KV setting.
\end{restatable}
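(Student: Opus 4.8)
The natural strategy is a reduction from a problem that is known not to admit a PTAS, e.g. MAX-3SAT or, more convenient for signaling-type arguments, an instance of INDEPENDENT SET / VERTEX COVER on graphs of bounded degree, for which constant-factor inapproximability is classical. I would reduce from MAX-3SAT (or the gap version via the PCP theorem): given a 3CNF formula, build a Bayesian ad auction instance so that the optimal revenue is ``large'' if the formula is (nearly) satisfiable and ``bounded away'' from that value (by a constant factor) otherwise. The key modelling freedom we have is the choice of states $\Theta$, the prior $\mu$, the bidders' valuation matrix $V$ (we are in the KV setting, so we may specify $V$ freely), and the click-through-rate profile $\lambda_1 \ge \dots \ge \lambda_m$. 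In particular, choosing $m$ small --- even $m=1$ or $m=2$ --- already makes $\Rev(V,\xi)$ a simple functional of the top bids under posterior $\xi$, namely (for $m=2$) proportional to the second-highest value $\xi^\top v_{(2)}$ scaled by $(\lambda_1-\lambda_2)$ plus the contribution of the third bidder; this ``second-price'' flavor is exactly what makes the revenue a non-linear, combinatorial function of the posterior.

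Concretely, I would encode each variable of the 3SAT instance (or each vertex of a graph) by a state of nature, let the prior be (close to) uniform over these states, and design bidders whose valuation vectors are $0/1$ indicators of clauses (resp. edges or independent sets). A public signaling scheme is a distribution $\gamma$ over posteriors summing back to $\mu$; by a standard argument it suffices to consider posteriors supported on few states, and the ``atoms'' of an optimal $\gamma$ correspond to the combinatorial objects we want to extract (a satisfying assignment, an independent set, etc.). The revenue $\sum_\xi \gamma(\xi)\Rev(V,\xi)$ then becomes, up to the normalization forced by consistency with $\mu$, a weighted count of how many clauses a chosen assignment satisfies (or the size of an independent set). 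Tuning the valuations and the $\lambda_j$'s so that the mechanism's revenue on a posterior is governed by the \emph{second}-highest expected valuation (not the highest) is what prevents the trivial ``reveal everything'' or ``reveal nothing'' scheme from being optimal, and is what ties the optimum to the hard combinatorial quantity; this is where the single-slot second-price results of \citet{emek2014signaling} give intuition, but here we genuinely use $m\ge 2$ (or a delicate single-slot gadget) to get super-constant hardness.

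I would then prove the two directions of the gap. \emph{Completeness:} from a satisfying (or near-satisfying) assignment, build a signaling scheme whose support consists of the posteriors corresponding to the ``true'' and ``false'' literal classes, check it is consistent with $\mu$ via~\eqref{eq:consistent}, and lower-bound its revenue using $\Rev(V,\xi)\coloneqq \sum_{j=1}^m j\,\xi^\top v_{j+1}(\lambda_j-\lambda_{j+1})$. \emph{Soundness:} show that any signaling scheme of revenue above the gap threshold can be rounded/decoded into an assignment satisfying a corresponding fraction of clauses --- typically by arguing that a random atom $\xi\sim\gamma$ yields, in expectation, that many satisfied clauses, and then derandomizing or just invoking the expectation bound. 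Combining with the known constant-factor inapproximability of MAX-3SAT (or the APX-hardness of the source problem) rules out a PTAS unless $\mathsf{P}=\mathsf{NP}$.

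\textbf{Main obstacle.} The crux is the soundness/decoding step: signaling schemes may use posteriors that are ``fractional'' --- not clean indicators of a literal class --- and a priori an optimal $\gamma$ could exploit many overlapping posteriors to extract revenue without corresponding to any coherent assignment. I expect the hard part to be designing the valuation gadget (and the $\lambda_j$ profile) rigidly enough that the revenue functional $\xi\mapsto\Rev(V,\xi)$ is, say, concave or piecewise-linear in a way that forces near-optimal schemes to ``round'' to the intended combinatorial structure, so that the gap is preserved; controlling the loss incurred in this rounding, while keeping all valuations in $[0,1]$ and the instance of polynomial size, is the delicate part of the construction.
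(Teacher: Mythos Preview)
Your plan has a fundamental gap that would make the reduction fail regardless of how carefully you design the gadgets: you propose to use a small, fixed number of slots (``even $m=1$ or $m=2$''), but this is precisely the regime in which the paper \emph{proves} the problem is easy. Theorem~\ref{lem:lemmaKnownD} shows that in the KV setting with $m$ fixed, an optimal signaling scheme can be computed in polynomial time via an LP with $O(n^m)$ variables. In particular, the second-price case you invoke for intuition (\citeauthor{emek2014signaling}) is exactly the polynomial-time case; those results do not ``give intuition'' for hardness, they are the positive result your reduction would have to evade. Any APX-hardness construction therefore \emph{must} let the number of slots grow with the instance, and likewise the number of states $d$ (since Theorem~\ref{thm:knownfixD} gives a polynomial-time algorithm for fixed $d$).

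The paper's reduction does reduce from VERTEX COVER on cubic graphs, as you briefly mention, but the instance looks nothing like your sketch: for each guess $z$ of the cover size it takes $m_z = z + \rho|E| - 1$ slots, all with $\lambda_j = 1$, and three families of bidders --- $\rho$ copies of an ``edge'' bidder per edge, one bidder per vertex, and $m_z+1$ dummy bidders with flat valuation $(1-\eta)/z$. With all click-through rates equal, the revenue of a posterior is simply $m_z$ times the $(m_z{+}1)$-th highest expected valuation, so the question becomes whether one can find a posterior in which roughly $m_z+1$ bidders simultaneously have expected valuation about $1/z$. Completeness comes from signaling the indicator of a minimum cover; soundness is the delicate part and is handled not by rounding posteriors to assignments but by reading off, from the best posterior of an approximately optimal scheme, the set of vertex-bidders whose expected valuation exceeds a threshold, then patching the uncovered edges and bounding the total via the degree-$3$ constraint. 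None of this is compatible with a constant-$m$ construction, so your proposal as written cannot be completed without a substantial redesign along these lines.
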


Theorem~\ref{thm:hardness_ptas} is proved by a reduction from the VERTEX COVER problem in cubic graphs~\cite{APXAlimonti}.
%

In the rest of this work, we study several settings in which the negative result in Theorem~\ref{thm:hardness_ptas} can be circumvented, by either fixing some parameters of the problem (see Sections~\ref{sec:known valuations}~and~\ref{subsec:unknown_parametrized}) or considering instances with a specific structure (see Sections~\ref{sec:single_minded}~and~\ref{subsec:unknown_non_zero}). 
\section{KV Setting: Parametrized Complexity}\label{sec:known valuations}

In this section, we study the parametrized complexity of the problem of computing an optimal signaling scheme, showing that it admits a polynomial-time algorithm when either the number of slots $m$ or the number of states of nature $d$ is fixed.

%
In the following, we let $\Pi_l \subseteq 2^{\N}$ be the set of all the the possible permutations of $l \leq n$ bidders taken from $\N$, with $\pi = (i_1,...,i_l) \in \Pi_l$ denoting a tuple made by bidders $i_1, \ldots, i_l \in \N$, in that order. 
We also let $\Xi_{\pi} \subseteq \Delta_{\Theta}$ be the (possibly empty) polytope of posteriors
in which the expected valuations of bidders in $\pi \in \Pi_l$ are ordered (from the highest to the lowest) according to $\pi$; formally, it holds $\Xi_{\pi} \coloneqq \left\{ \xi \in \Delta_\Theta \mid  \xi^\top v_{i_1} \ge \xi^\top v_{i_2} \ge \ldots  \ge \xi^\top v_{i_l} \right\}$.
Notice that, given a permutation $\pi \in \Pi_{l}$ of $l \geq m+1$ bidders, the expected revenue of the mechanism in any posterior $\xi \in \Xi_{\pi}$ is $\textsc{Rev}(V,\xi) = \xi^\top \sum_{j=1}^{m} j v_{i_{j + 1}} (\lambda_{j}-\lambda_{j+1})$, since the bidders truthfully report their expected valuations to the mechanism, and, thus, the latter allocates slots to bidders in $\pi$ according to their order in the permutation.
Thus, for any fixed $\pi \in \Pi_{l}$ with $l \geq m+1$, the term $\textsc{Rev}(V,\xi)$ is linear in $\xi$ over $\Xi_{\pi}$.

\subsection{Fixing the Number of Slots $m$}

In this case, the problem can be solved in polynomial time by formulating it as an LP, thanks to the following lemma:
%
%
%
%
%
\begin{restatable}{lemma}{lemsupport}\label{lem:support}
	There always exists an optimal signaling scheme $\gamma$ such that $|\Xi_\pi \cap \textnormal{supp}(\gamma)| \leq 1$ for every $\pi \in \Pi_{m+1}$.
\end{restatable}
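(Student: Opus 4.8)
The plan is to show that any way in which an optimal signaling scheme violates the claimed property can be removed by a local \emph{merging} operation that preserves both the consistency constraint~\eqref{eq:consistent} and the expected revenue, while strictly decreasing the size of the support. Since the support is finite, iterating this operation terminates and yields an optimal scheme of the desired form.

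First I would recall that there always exists an optimal signaling scheme with \emph{finite} support: the optimal expected revenue equals the concave closure of the map $\xi \mapsto \textsc{Rev}(V,\xi)$ evaluated at the prior $\mu$, and, since $\textsc{Rev}(V,\cdot)$ is a bounded, piecewise-linear function on the polytope $\Delta_{\Theta}$ (it is linear on each region $\Xi_\pi$, $\pi \in \Pi_{m+1}$, as recalled before the statement), this value is attained by a finitely-supported distribution over $\Delta_\Theta$. Fix any such finite-support optimal $\gamma$, and suppose that for some $\pi=(i_1,\dots,i_{m+1}) \in \Pi_{m+1}$ there are two \emph{distinct} posteriors $\xi_1,\xi_2 \in \Xi_\pi \cap \text{supp}(\gamma)$. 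Let $w \coloneqq \gamma(\xi_1)+\gamma(\xi_2) > 0$ and define the merged posterior $\xi' \coloneqq \frac{1}{w}\big(\gamma(\xi_1)\,\xi_1 + \gamma(\xi_2)\,\xi_2\big)$. Let $\gamma'$ be obtained from $\gamma$ by removing all the mass placed on $\xi_1$ and $\xi_2$ and adding mass $w$ to $\xi'$, on top of any mass $\xi'$ may already carry.

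Then I need to verify three facts. (i) $\gamma'$ is a consistent signaling scheme: it is a distribution over $\Delta_\Theta$ since $\xi' \in \Delta_\Theta$ (a convex combination of posteriors) and the total mass is unchanged; and it satisfies~\eqref{eq:consistent} because $w\,\xi'(\theta) = \gamma(\xi_1)\,\xi_1(\theta) + \gamma(\xi_2)\,\xi_2(\theta)$ for every $\theta \in \Theta$, so the left-hand side of~\eqref{eq:consistent} does not change. (ii) $\gamma'$ has the same expected revenue as $\gamma$: since $\Xi_\pi$ is convex and $\xi_1,\xi_2 \in \Xi_\pi$, also $\xi' \in \Xi_\pi$; and on $\Xi_\pi$ the function $\textsc{Rev}(V,\cdot)$ agrees with the linear function $\xi \mapsto \xi^\top c_\pi$, where $c_\pi \coloneqq \sum_{j=1}^{m} j\,(\lambda_j-\lambda_{j+1})\,v_{i_{j+1}}$, so $w\,\textsc{Rev}(V,\xi') = w\,(\xi')^\top c_\pi = \gamma(\xi_1)\,\textsc{Rev}(V,\xi_1) + \gamma(\xi_2)\,\textsc{Rev}(V,\xi_2)$; hence the total expected revenue is unchanged and $\gamma'$ is still optimal. (iii) $|\text{supp}(\gamma')| < |\text{supp}(\gamma)|$, since we deleted the two distinct points $\xi_1 \ne \xi_2$ and introduced at most one new point.

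Finally, I would iterate: as long as the property in the statement fails, there is a pair $(\pi,\{\xi_1,\xi_2\})$ to which the merge applies, and each merge strictly decreases the support size, which is a non-negative integer; hence after finitely many steps we reach an optimal signaling scheme $\gamma$ with $|\Xi_\pi \cap \text{supp}(\gamma)| \le 1$ for every $\pi \in \Pi_{m+1}$. The one subtlety I anticipate is that merging two posteriors inside one region $\Xi_\pi$ may create a fresh violation in a different region $\Xi_{\pi'}$, since $\xi'$ can lie in several such regions; this is harmless precisely because the strictly decreasing support-size potential simultaneously forces termination and rules out any leftover violation in the final scheme. Apart from this bookkeeping — and the standard fact that a finite-support optimum exists, which is what makes the iteration well-founded — the argument reduces to the two elementary identities above, which follow immediately from the barycentric definition of $\xi'$ and the linearity of $\textsc{Rev}(V,\cdot)$ on each $\Xi_\pi$.
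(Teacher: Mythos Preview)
Your argument is correct and follows essentially the same route as the paper: merge any two posteriors lying in a common region $\Xi_\pi$ into their barycenter, using convexity of $\Xi_\pi$ for feasibility and linearity of $\textsc{Rev}(V,\cdot)$ on $\Xi_\pi$ for revenue preservation. Your write-up is in fact more careful than the paper's, since you explicitly justify the existence of a finite-support optimum, track the strictly decreasing support-size potential to guarantee termination, and note that merges may create fresh violations elsewhere without endangering termination; the paper simply exhibits one merging step and leaves the iteration implicit.
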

%
%
%
%
Intuitively, the lemma follows from the fact that, given any signaling scheme $\gamma$ and two posteriors $\xi,\xi' \in \text{supp}(\gamma)$ such that $\xi,\xi' \in \Xi_\pi$ for some $\pi \in \Pi_{m+1}$, it is always possible to define a new signaling scheme that replaces $\xi$ and $\xi'$ with a suitably-defined convex combination of them, without decreasing the expected revenue (since it is linear over $\Xi_\pi$).

By Lemma~\ref{lem:support}, we can re-write the revenue maximization problem as $\max \sum_{\pi \in \Pi_{m+1}} \gamma(\xi_\pi) \textsc{Rev} (V, \xi_\pi)$ subject to constraints ensuring that each $\xi_\pi $ belongs to $\Xi_\pi$ (for $\pi \in \Pi_{m+1}$) and that $\gamma$ is a consistent probability distribution over such posteriors (see Equation~\eqref{eq:consistent}).
This problem can be formulated as an LP by introducing a variable for each $\pi \in \Pi_{m+1}$ and $\theta \in \Theta$, encoding the products $\gamma (\xi_\pi) \xi_\pi(\theta)$ that define the expected revenue.
Overall, the resulting LP (see LP~\ref{eqn:LP_kfixed} in the Supplementary Material) has a number of variables and constraints that is $O(n^{m})$, which, after fixing $m$, is polynomial in the size of the input.\footnote{We remark that LP~\ref{eqn:LP_kfixed} in the Supplementary Material is a generalization of the LP presented by~\citeauthor{emek2014signaling}~[\citeyear{emek2014signaling}] for the easier case of second price auctions.}
Thus, we conclude that:
%
%
%
%
\begin{restatable}{theorem}{lemmaKnownD}\label{lem:lemmaKnownD}
	In the KV setting, if the number of slots $m$ is fixed, then an optimal signaling scheme can be computed in polynomial time.
\end{restatable}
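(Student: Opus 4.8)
The plan is to turn Lemma~\ref{lem:support} into an explicit linear program of size $O(n^m)$ and then invoke polynomial-time solvability of LPs. First I would enumerate the set $\Pi_{m+1}$ of ordered $(m{+}1)$-tuples of distinct bidders; there are at most $n^{m+1}/(n-m-1)! \le n^{m+1}$ of them, but since the revenue $\textsc{Rev}(V,\xi)$ for $\xi\in\Xi_\pi$ depends only on bidders $i_2,\ldots,i_{m+1}$ and their relative order (the top bidder $i_1$ is displayed in slot $1$ but pays based on $i_2$'s bid), the effective count is $O(n^m)$ after fixing $m$; in any case it is polynomial once $m$ is a constant. By Lemma~\ref{lem:support} there is an optimal $\gamma$ placing at most one posterior $\xi_\pi$ in each cell $\Xi_\pi$, so the objective becomes $\sum_{\pi\in\Pi_{m+1}}\gamma(\xi_\pi)\,\textsc{Rev}(V,\xi_\pi)$.

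The key modeling step is that both $\textsc{Rev}(V,\xi_\pi)$ and the consistency constraint~\eqref{eq:consistent} are bilinear in the pair $(\gamma(\xi_\pi),\xi_\pi)$ but become linear after the standard change of variables $x_\pi(\theta)\coloneqq\gamma(\xi_\pi)\,\xi_\pi(\theta)$ for each $\pi\in\Pi_{m+1}$ and $\theta\in\Theta$. Writing $g_\pi\coloneqq\gamma(\xi_\pi)=\sum_{\theta}x_\pi(\theta)$, the revenue term is $\textsc{Rev}(V,\xi_\pi)\cdot g_\pi = \sum_{j=1}^m j(\lambda_j-\lambda_{j+1})\,v_{i_{j+1}}^\top x_\pi$, which is linear in the $x_\pi(\theta)$. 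The constraint that $\xi_\pi\in\Delta_\Theta$ is automatic since $x_\pi(\theta)\ge0$ and $\sum_\theta x_\pi(\theta)=g_\pi$; the constraint $\xi_\pi\in\Xi_\pi$, i.e.\ $\xi_\pi^\top v_{i_1}\ge\cdots\ge\xi_\pi^\top v_{i_{m+1}}$, homogenizes to $v_{i_k}^\top x_\pi \ge v_{i_{k+1}}^\top x_\pi$ for $k=1,\ldots,m$, again linear; and consistency~\eqref{eq:consistent} reads $\sum_{\pi}x_\pi(\theta)=\mu_\theta$ for all $\theta$. This is exactly LP~\ref{eqn:LP_kfixed}; it has $O(n^m\cdot d)$ variables and $O(n^m\cdot m)+d$ constraints, which for fixed $m$ is polynomial in the input size $\mathrm{poly}(n,d)$.

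It then remains to check two routine points. First, feasibility and equivalence: given any consistent $\gamma$, restricting (by Lemma~\ref{lem:support}) to one posterior per cell and setting $x_\pi(\theta)=\gamma(\xi_\pi)\xi_\pi(\theta)$ yields a feasible LP point of the same value; conversely, from any feasible $(x_\pi)$ with $g_\pi>0$ recover $\xi_\pi(\theta)=x_\pi(\theta)/g_\pi\in\Xi_\pi$ and a consistent distribution $\gamma$ with $\gamma(\xi_\pi)=g_\pi$, of the same objective value (cells with $g_\pi=0$ are simply dropped). Hence $OPT$ equals the LP optimum. Second, solvability: the LP has polynomially many variables and constraints with polynomial-size rational coefficients (the $\lambda_j$ and $V(i,\theta)$), so it is solvable in polynomial time by any standard LP algorithm, and from its solution we read off the at most $O(n^m)$ posteriors and their probabilities, giving an explicit optimal signaling scheme.

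I do not foresee a genuine obstacle here: the content is carried entirely by Lemma~\ref{lem:support} (which bounds the support structure) and by the observation that the revenue is linear over each $\Xi_\pi$ (already noted in the excerpt). The only thing to be careful about is the bookkeeping in the linearizing substitution — ensuring the homogenized ordering inequalities correctly cut out $\Xi_\pi$ and that cells with zero probability cause no division-by-zero when reconstructing $\xi_\pi$ — but this is purely mechanical.
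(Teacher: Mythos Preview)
Your proposal is correct and follows essentially the same approach as the paper: invoke Lemma~\ref{lem:support} to restrict to one posterior per cell $\Xi_\pi$, linearize via the substitution $x_\pi(\theta)=\gamma(\xi_\pi)\xi_\pi(\theta)$, and solve the resulting LP (which is precisely LP~\eqref{eqn:LP_kfixed}) of polynomial size for fixed $m$. The reconstruction of $\gamma$ from the LP solution and the equivalence-of-values argument you sketch also match the paper's proof verbatim in spirit.
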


\subsection{Fixing the Number of States $d$}

Our polynomial-time algorithm exploits the fact that an optimal signaling scheme can be computed by restricting the attention to distributions supported on a finite set of posteriors whose cardinality is polynomial in all the parameters, except from $d$.
In particular, it is sufficient to focus on the set $\Xi^* \coloneqq \bigcup_{\pi \in \Pi_{n}} V(\Xi_{\pi})$, where $V(\cdot)$ denotes the set of vertices of the polytope given as input.
Formally:
%
%
%
%
\begin{restatable}{lemma}{lemmaKnownfixd}\label{lem:knownfixD}
	It holds that $OPT_{\Xi^*} = OPT$.
	%
\end{restatable}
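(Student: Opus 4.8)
The plan is to establish $OPT_{\Xi^*} \le OPT$ and $OPT_{\Xi^*} \ge OPT$ separately, with only the second direction requiring real work. The inequality $OPT_{\Xi^*} \le OPT$ is immediate: any feasible $\gamma \in \Delta_{\Xi^*}$ satisfying the consistency constraint~\eqref{eq:lp_signaling_cons} is a legitimate signaling scheme, so its objective value cannot exceed the unrestricted optimum $OPT$. The content is in the reverse direction, and the natural route is to take an arbitrary optimal signaling scheme $\gamma^\star$ (with arbitrary finite support in $\Delta_\Theta$) and show how to ``push'' each posterior in its support to the set $\Xi^*$ of vertices without losing revenue and without violating consistency.

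First I would fix any posterior $\xi \in \mathrm{supp}(\gamma^\star)$ and locate the cell of the arrangement it lies in: since the polytopes $\{\Xi_\pi\}_{\pi \in \Pi_n}$ cover $\Delta_\Theta$ (every posterior induces some ordering of all $n$ bidders' expected valuations, breaking ties arbitrarily), there is some $\pi \in \Pi_n$ with $\xi \in \Xi_\pi$. By the observation recorded just before the subsection, $\textsc{Rev}(V,\cdot)$ is linear over $\Xi_\pi$. Now I want to replace the single atom at $\xi$ by a distribution supported on the vertices $V(\Xi_\pi)$. Writing $\xi$ as a convex combination $\xi = \sum_k \alpha_k\, \xi^{(k)}$ of vertices $\xi^{(k)} \in V(\Xi_\pi) \subseteq \Xi^*$ (Carathéodory/the fact that a point in a polytope is a convex combination of its vertices), the key point is that this substitution preserves everything: (i) it preserves the contribution to the consistency constraint, because $\sum_k \alpha_k \, \xi^{(k)}(\theta) = \xi(\theta)$ for every $\theta$, so summing the replacements over all atoms of $\gamma^\star$ still yields $\mu_\theta$; and (ii) it preserves the revenue contribution, because by linearity of $\textsc{Rev}(V,\cdot)$ on $\Xi_\pi$ we have $\sum_k \alpha_k \, \textsc{Rev}(V,\xi^{(k)}) = \textsc{Rev}(V,\xi)$. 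Carrying out this replacement for every $\xi \in \mathrm{supp}(\gamma^\star)$ and aggregating the resulting atoms (adding probabilities when the same vertex arises from different cells) produces a feasible $\gamma \in \Delta_{\Xi^*}$ with objective value exactly $OPT$, hence $OPT_{\Xi^*} \ge OPT$.

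The main thing to be careful about — the only place the argument could go wrong — is the handling of ties and the resulting ambiguity in which cell $\Xi_\pi$ contains a given $\xi$, and whether restricting to $\pi \in \Pi_n$ (full permutations of all $n$ bidders) rather than shorter tuples is what makes $\textsc{Rev}(V,\cdot)$ linear on each cell. I would emphasize that it suffices to pick, for each $\xi$, any single $\pi \in \Pi_n$ with $\xi \in \Xi_\pi$; since $n \ge m+1$ (indeed $m \le n$, and the degenerate case $n=m$ can be handled by padding or noting the revenue formula still applies), the cell $\Xi_\pi$ determines the full ordering of expected valuations and hence which bidders occupy the first $m$ slots, making $\textsc{Rev}(V,\xi) = \xi^\top \sum_{j=1}^m j\, v_{i_{j+1}} (\lambda_j - \lambda_{j+1})$ genuinely linear there. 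One should also note that $\Xi^*$ is finite — each $\Xi_\pi$ is a polytope in the bounded dimension-$(d{-}1)$ simplex with finitely many vertices, and $|\Pi_n|$ is finite — so LP~\ref{eq:lp_signaling} over $\Xi = \Xi^*$ is well-posed. With these points pinned down, the equality $OPT_{\Xi^*} = OPT$ follows.
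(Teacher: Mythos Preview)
Your proposal is correct and follows essentially the same approach as the paper: cover $\Delta_\Theta$ by the cells $\Xi_\pi$ for $\pi\in\Pi_n$, use linearity of $\textsc{Rev}(V,\cdot)$ on each cell, decompose every posterior in the support of an optimal scheme into vertices via Carath\'eodory, and aggregate to obtain a feasible $\gamma\in\Delta_{\Xi^*}$ with the same revenue. Your additional remarks on tie-breaking, the $n\ge m+1$ requirement, and finiteness of $\Xi^*$ are careful elaborations but do not change the underlying argument.
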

The lemma follows from the fact that, given any signaling scheme $\gamma$ and posterior $\xi \in \text{supp}(\gamma)$ such that $\xi \in \Xi_\pi$ for some $\pi \in \Pi_n$, by Carathéodory's theorem it is always possible (since $\Xi_\pi $ is a polytope) to decompose $\xi$ into a convex combination of the vertices of $\Xi_{\pi}$, obtaining a new signaling scheme that provides the mechanism with an expected revenue at least as large as that of $\gamma$ (since $\textsc{Rev}( V, \xi)$ is linear over $\Xi_\pi$).
By observing that $|\Xi^*|= O( (n^2+d)^{d-1})$, it is easy to show that an optimal signaling scheme can be computed by means of LP~\ref{eq:lp_signaling} instantiated for the set $\Xi^*$, which has a number of variables and constraints that is polynomial once $d$ is fixed.
This proves the following:
%
%
%
%
\begin{restatable}{theorem}{thmknownfixd}\label{thm:knownfixD}
	In the KV setting, if the number of states $d$ is fixed, then an optimal signaling scheme can be computed in polynomial time.
\end{restatable}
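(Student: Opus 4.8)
The plan is to combine Lemma~\ref{lem:knownfixD} with a counting argument showing that $\Xi^*$ (or a conveniently enumerable superset of it) has polynomial cardinality once $d$ is fixed, and then to invoke the KV version of LP~\ref{eq:lp_signaling}. First I would observe that each polytope $\Xi_\pi$ with $\pi \in \Pi_n$ is carved out of the $(d-1)$-dimensional simplex $\Delta_\Theta$ by the normalization equality $\sum_{\theta \in \Theta} \xi(\theta) = 1$ together with inequalities of only two types: the non-negativity constraints $\xi(\theta) \ge 0$ for $\theta \in \Theta$, which are tight on the hyperplane $\xi(\theta) = 0$, and the ordering constraints $\xi^\top v_i \ge \xi^\top v_j$, which are tight on the hyperplane $\xi^\top(v_i - v_j) = 0$. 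Collecting these into the pool $\mathcal{H} := \{ \xi \in \mathbb{R}^d \mid \xi^\top (v_i - v_j) = 0\}_{i < j} \cup \{ \xi \in \mathbb{R}^d \mid \xi(\theta) = 0 \}_{\theta \in \Theta}$, we get $|\mathcal{H}| = O(n^2 + d)$, and crucially this pool does \emph{not} depend on $\pi$. Since any vertex of any $\Xi_\pi$ is the unique point at which $d-1$ linearly independent constraints of $\Xi_\pi$ are tight on top of the normalization equality, every element of $\Xi^* = \bigcup_{\pi \in \Pi_n} V(\Xi_{\pi})$ is the solution of a linear system consisting of $\sum_{\theta} \xi(\theta) = 1$ and $d-1$ hyperplanes from $\mathcal{H}$.

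Next I would let $\widehat{\Xi}$ be the set of all points obtained by choosing $d-1$ hyperplanes out of $\mathcal{H}$, intersecting them with $\{\sum_{\theta} \xi(\theta)=1\}$, keeping the intersection when it is a single point, and discarding it unless it lies in $\Delta_\Theta$. By construction $\Xi^* \subseteq \widehat{\Xi} \subseteq \Delta_\Theta$, and $|\widehat{\Xi}| \le \binom{|\mathcal{H}|}{d-1} = O((n^2+d)^{d-1})$, which is polynomial in the input size once $d$ is a constant; moreover $\widehat{\Xi}$ can be listed in polynomial time, since for each of the polynomially many $(d-1)$-subsets one only needs to solve a $d \times d$ linear system and test membership in $\Delta_\Theta$. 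Because every distribution supported on $\widehat{\Xi} \subseteq \Delta_\Theta$ is a legitimate signaling scheme, $OPT_{\widehat{\Xi}} \le OPT$; while $\widehat{\Xi} \supseteq \Xi^*$ together with Lemma~\ref{lem:knownfixD} gives $OPT_{\widehat{\Xi}} \ge OPT_{\Xi^*} = OPT$, so $OPT_{\widehat{\Xi}} = OPT$.

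It then remains to instantiate LP~\ref{eq:lp_signaling} (its KV version) with $\Xi = \widehat{\Xi}$. This LP has $|\widehat{\Xi}| = O((n^2+d)^{d-1})$ variables and $|\widehat{\Xi}| + d$ constraints, all of polynomial size for fixed $d$, and its objective coefficients $\textsc{Rev}(V,\xi)$ are computable in polynomial time for each $\xi \in \widehat{\Xi}$ by sorting the bidders according to $\xi^\top v_i$ and evaluating the closed-form revenue expression. Solving it with any polynomial-time LP algorithm yields a consistent distribution $\gamma$ over $\widehat{\Xi}$ of value $OPT$, i.e., an optimal signaling scheme, in polynomial time.

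The hard part is the counting step of the first two paragraphs: the realization that, although $|\Pi_n|$ is super-polynomial, the union of all vertex sets $V(\Xi_{\pi})$ stays polynomial because each such vertex is pinned down by only $d-1$ tight constraints drawn from the fixed, $\pi$-independent pool $\mathcal{H}$ of size $O(n^2+d)$. Once this is in place, the remaining work — enumerating $\widehat{\Xi}$, sandwiching $OPT_{\widehat{\Xi}}$ between $OPT$ and $OPT_{\Xi^*}$, and bounding the LP size — is routine. One minor point to handle with care is that some $\Xi_\pi$ may be lower-dimensional or empty, but this only means fewer candidate vertices, so enumerating all $(d-1)$-subsets of $\mathcal{H}$ still yields a superset of $\Xi^*$.
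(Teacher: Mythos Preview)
Your proposal is correct and follows essentially the same approach as the paper: both arguments bound the number of candidate vertices by observing that every vertex of every $\Xi_\pi$ is determined by $d-1$ tight hyperplanes drawn from a $\pi$-independent pool of size $O(n^2+d)$, giving the $O((n^2+d)^{d-1})$ bound, and then solve LP~\ref{eq:lp_signaling} over this finite set. Your treatment is in fact slightly more explicit than the paper's, since you construct the enumerable superset $\widehat{\Xi}\supseteq\Xi^*$ and sandwich $OPT_{\widehat{\Xi}}$ between $OPT$ and $OPT_{\Xi^*}$, whereas the paper simply asserts $|\Xi^*|=O((n^2+d)^{d-1})$ and solves the LP over $\Xi^*$ directly.
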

\section{KV Setting: Single-Minded Bidders}\label{sec:single_minded}

In this section, we focus on particular Bayesian ad auctions where the bidders are \emph{single minded}.
Intuitively, in our setting, by single mindedness we mean that each bidder is interested in displaying their ad only when the realized state of nature is a specific (single) state, and that all the bidders interested in the same state value a click on their ad for the same amount.
We introduce the following formal definition:
\begin{definition}[Single-minded bidders]\label{def:single_minded}
	In a Bayesian ad auction, we say that bidders are \emph{single minded} if there exist $\N_\theta \subseteq \N$ and $\delta_\theta \in [0,1]$ for all $\theta \in \Theta$ such that:
	\begin{itemize}
		\item[(i)] $\N = \bigcup_{\theta \in \Theta} \N_\theta$ and $\N_\theta \cap \N_{\theta'} = \emptyset$ for all $\theta \neq \theta' \in \Theta$;
		\item[(ii)] for every $\theta \in \Theta$ and $i \in \N_\theta$, it holds $v_i(\theta) = \delta_\theta $ and $v_i(\theta') = 0$ for all $\theta' \in \Theta: \theta' \neq \theta $.
	\end{itemize}
\end{definition}

Notice that, given a posterior $\xi \in \Delta_{\Theta}$ induced by the mechanism, all the bidders $i$ belonging to the same set $\N_\theta$ have the same expected valuation, namely $\xi^\top v_i = \delta_\theta \xi(\theta)$ for all $\theta \in \Theta$ and $i \in \N_\theta$.
As a result, given that bidders truthfully report their expected valuations, the mechanism will always receive at most $d$ different bids, one per set $\N_\theta$.

The last observation implies that, given $\xi \in \Delta_{\Theta}$, in order to unequivocally define an allocation of bidders to slots (and, thus, also define the expected payments) it is sufficient to know the relative ordering of the (at most) $d$ different expected valuations associated to sets $\N_\theta$.
This allows us to tackle the problem with an approach analogous to the one of Section~\ref{sec:known valuations}, with the only difference that, in this case, we will reason about permutations of the groups of bidders $\N_\theta$, rather than about permutations of all the individual bidders.

In the following, we let $\Pi \subseteq 2^\Theta$ be the set of all the permutations of the sates of nature $\Theta = \{ \theta_1, \ldots, \theta_d \}$, while we let $\pi = (\theta_{k_1}, \ldots, \theta_{k_d}) \in \Pi$ be an ordered tuple made by states $\theta_{k_1}, \ldots, \theta_{k_d} \in \Theta$, where $k_1,\ldots,k_d \in \{1,\ldots,d\}$.
Moreover, $ \Xi_{\pi} \coloneqq \left\{ \xi \in \Delta_{\Theta} \mid \delta_{\theta_{k_1}} \xi(\theta_{k_1}) \geq  \ldots \geq \delta_{\theta_{k_d}} \xi(\theta_{k_d}) \right\} $ is the polytope of posteriors in which the expected valuations associated to sets $\N_\theta$ are ordered according to $\pi$.
%

The first preliminary result that we need in order to derive our approximation algorithm is a characterization of the vertices of the sets $\Xi_\pi$ for $\pi \in \Pi$, as follows.
%
%
\begin{restatable}{lemma}{uniformVertices}\label{lem:uniformVertices}
	Given $\pi \in \Pi$ and $\xi \in \Xi_\pi$, it holds that $\xi \in V(\Xi_\pi)$ if and only if there exists $\ell \in \{1,\ldots, d\}$ such that:
	\begin{itemize}
		\item[(i)] $\delta_{\theta_{k_1}} \xi(\theta_{k_1}) = \ldots = \delta_{\theta_{k_\ell}} \xi(\theta_{k_\ell}) > 0$; and 
		\item[(ii)] $\delta_{\theta_{k_{\ell+1}}} \xi(\theta_{k_{\ell+1}}) = \ldots = \delta_{\theta_{k_d}} \xi(\theta_{k_d}) = 0$.
	\end{itemize}
\end{restatable}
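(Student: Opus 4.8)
The plan is to characterize the vertices of the polytope $\Xi_\pi$ by a standard tight-constraints argument. First I would write down the defining inequalities of $\Xi_\pi$ explicitly: the simplex constraints $\xi(\theta) \ge 0$ for all $\theta \in \Theta$ and $\sum_{\theta \in \Theta} \xi(\theta) = 1$, together with the $d-1$ chain inequalities $\delta_{\theta_{k_1}}\xi(\theta_{k_1}) \ge \delta_{\theta_{k_2}}\xi(\theta_{k_2}) \ge \ldots \ge \delta_{\theta_{k_d}}\xi(\theta_{k_d})$. Since $\Xi_\pi$ lives in the $(d-1)$-dimensional affine hull of $\Delta_\Theta$, a point $\xi \in \Xi_\pi$ is a vertex if and only if the inequality constraints that are tight at $\xi$ have rank $d-1$ (equivalently, span a space of dimension $d-1$ inside the affine hull). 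So the whole argument amounts to: enumerate which collections of tight constraints have full rank, and read off what they force on $\xi$.

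For the ``if'' direction, suppose $\xi$ satisfies (i) and (ii) for some $\ell$. Then the tight constraints are: the $\ell-1$ equalities $\delta_{\theta_{k_1}}\xi(\theta_{k_1}) = \ldots = \delta_{\theta_{k_\ell}}\xi(\theta_{k_\ell})$ coming from the chain, and the $d-\ell$ equalities $\xi(\theta_{k_{\ell+1}}) = \ldots = \xi(\theta_{k_d}) = 0$ coming from nonnegativity (note (ii) forces these, using $\delta_{\theta_{k_j}} > 0$ — here one should note the edge case $\delta_{\theta_{k_j}} = 0$, which I would handle by observing the corresponding bidders are irrelevant, or simply by arguing such coordinates are pinned too). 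That is $(\ell-1) + (d-\ell) = d-1$ constraints; I would check they are linearly independent together with the normalization, hence $\xi$ is the unique solution and therefore a vertex. For the ``only if'' direction, suppose $\xi \in V(\Xi_\pi)$. Then at least $d-1$ of the inequality constraints are tight. Let $\ell$ be the largest index such that $\delta_{\theta_{k_\ell}}\xi(\theta_{k_\ell}) > 0$ (so $\delta_{\theta_{k_j}}\xi(\theta_{k_j}) = 0$ for $j > \ell$, giving (ii)); it remains to show all the chain inequalities among the first $\ell$ coordinates are tight, i.e. (i). The key observation is that on the coordinates $k_1,\dots,k_\ell$ the only constraints that can be tight are the $\ell-1$ chain inequalities (since those coordinates are strictly positive, nonnegativity is slack there), so to reach a total of $d-1$ tight independent constraints we need all $\ell-1$ of them, forcing the equalities in (i).

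The main obstacle, and the only place requiring a little care, is the boundary bookkeeping: making sure the counting of tight constraints is exactly right and that the chosen constraints are genuinely linearly independent (the chain inequalities $\delta_{\theta_{k_j}}\xi(\theta_{k_j}) = \delta_{\theta_{k_{j+1}}}\xi(\theta_{k_{j+1}})$ are independent because each involves a new coordinate), plus the degenerate case where some $\delta_\theta = 0$. I would dispose of the latter by noting that if $\delta_{\theta_{k_j}} = 0$ then the constraint $\delta_{\theta_{k_j}}\xi(\theta_{k_j}) \ge \delta_{\theta_{k_{j+1}}}\xi(\theta_{k_{j+1}})$ forces $\delta_{\theta_{k_{j+1}}}\xi(\theta_{k_{j+1}}) = 0$ as well, so the characterization degenerates gracefully into case (ii) starting at $j$, consistent with the stated form. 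Everything else is the routine vertex/tight-constraint dictionary for polytopes.
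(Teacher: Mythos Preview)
Your proposal is correct and follows essentially the same approach as the paper: both directions are argued by counting the tight defining constraints of $\Xi_\pi$ (the chain inequalities and the nonnegativity constraints, together with the simplex normalization) and checking that they are linearly independent, with the paper phrasing the ``only if'' direction as a contrapositive rather than directly. Your explicit handling of the degenerate case $\delta_\theta = 0$ is a small refinement the paper omits.
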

Intuitively, Lemma~\ref{lem:uniformVertices} states that the vertices of a set $\Xi_\pi$ are all the posteriors $\xi \in \Delta_\Theta$ such that, for some $\ell \in \{1, \ldots, d\}$, only the first $\ell$ states according to the ordering defined by $\pi$ are assigned a positive probability, while all the remaining states have zero probability.
Moreover, the positive probabilities of the posterior $\xi$ are defined so that all the bidders belonging to the first $\ell$ sets $\N_\theta$, according to the ordering defined by $\pi$, are the same.
Notice that, in the special case in which all the values $\delta_\theta$ are equal to one, the vertices of all the sets $\Xi_\pi$ are all the uniform probability distributions over subsets of $\ell$ states of nature, for any $\ell \in \{1, \ldots, d\}$.

%
%

By letting $\Xi^* = \bigcup_{\pi \in \Pi} V(\Xi_\pi)$, since the term $\textsc{Rev}(V,\xi)$ is linear in $\xi$ over $\Xi_{\pi}$ for every permutation $\pi \in \Pi$, we can conclude that $OPT_{\Xi^*} = OPT$ (the proof is analogous to that of Lemma~\ref{lem:knownfixD}).
Thus, Lemma~\ref{lem:uniformVertices} allows us to find an optimal signaling scheme by solving LP~\ref{eq:lp_signaling} for the set $\Xi^*$ and the matrix of bidders' valuations $V$.
However, notice that, since the size of $\Xi^*$ is exponential in $d$, the resulting LP has exponentially-many variables.
Nevertheless, since the LP has polynomially-many constraints, we can still solve it in polynomial time by applying the ellipsoid algorithm to its dual, provided that a polynomial-time separation oracle is available.

In order to design a polynomial-time separation oracle, we apply the procedure described above to a relaxed version of LP~\ref{eq:lp_signaling}, whose optimal value is sufficiently ``close'' to that of the original LP.
Given $\beta \in \mathbb{R}_+$, the relaxed LP reads as follows:
\begin{subequations}\label{eq:lp_signaling_relax_main}
	\begin{align}
	\max_{\gamma \in \Delta_{\Xi^*},z\le 0} & \,\, \sum_{\xi \in \Xi} \gamma(\xi) \, \textsc{Rev}(\V,\xi) +   \beta z \quad \text{s.t.} \\
	& \sum_{\xi \in \Xi^*} \gamma(\xi) \, \xi(\theta) -  z  \ge \mu_\theta & \forall \theta \in \Theta. 
	\end{align}
\end{subequations}
The dual problem of LP~\ref{eq:lp_signaling_relax_main} reads as follows:
\begin{subequations}\label{eq:dual_lp_signaling}
	\begin{align}
	\min_{y \leq 0, t} & \,\, \sum_{\theta \in \Theta} y_\theta \mu_\theta +t  \quad \text{s.t.} \\
	& \sum_{\theta \in \Theta} y_\theta \, \xi(\theta) + t \geq \textsc{Rev}(\V,\xi) & \forall \xi \in \Xi^* \\
	& \sum_{\theta \in \Theta} y_\theta \ge -\beta, \label{eq:constraint_relaxed}
	\end{align}
\end{subequations}
where $y_\theta$ for $\theta \in \Theta$ are dual variables associated to Constraints~\eqref{eq:lp_signaling_cons}, while $t$ is a dual variable for $\sum_{\xi \in \Xi^*} \gamma(\xi) = 1$.
Notice that, by relaxing the LP, in the dual LP~\ref{eq:dual_lp_signaling} we get the additional Constraint~\eqref{eq:constraint_relaxed} and that $y_\theta \leq 0$ for all $\theta \in \Theta$.
This is crucial to design a polynomial-time separation oracle.

The separation problem associated to Problem~\ref{eq:dual_lp_signaling} reads as:
%
%
%
\begin{definition}[Separation problem]\label{def:separation_problem}
	Given values for the dual variables $y_{\theta} \in [-\beta,0]$ for all $\theta \in \Theta$, compute:
	\begin{equation}\label{eq:separation}
	\max_{\xi \in \Xi^*} \,\, \textsc{Rev}(V, \xi) - \sum_{\theta \in \Theta}    y_{\theta} \, \xi(\theta).
	\end{equation}
\end{definition}
The following Lemma~\ref{lem:dp} shows that Problem~\ref{eq:separation} can be solved optimally up to any given additive loss $\lambda > 0$, by means of a \emph{dynamic programming} algorithm that runs in time polynomial in the size of the input, in $\frac{1}{\lambda}$, and in $\beta$.
Formally:
\begin{restatable}{lemma}{dinprog}\label{lem:dp}
	Given $\lambda > 0$, there exists an algorithm that finds an additive $\lambda$-approximation to Problem~\ref{eq:separation}, in time polynomial in the size of the input, in $\frac{1}{\lambda}$, and in $\beta$.
\end{restatable}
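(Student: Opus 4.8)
The plan is to turn Problem~\ref{eq:separation} into a combinatorial optimization over subsets of states, and then to solve that problem approximately by combining a knapsack-style dynamic program with a binary search. The first step is to make the feasible set explicit. By Lemma~\ref{lem:uniformVertices} and the discussion establishing $\Xi^* = \bigcup_{\pi \in \Pi} V(\Xi_\pi)$, the set $\Xi^*$ is exactly the family $\{\xi_S\}$ of posteriors indexed by the non-empty subsets $S \subseteq \Theta$ with $\delta_\theta > 0$ for all $\theta \in S$, where $\xi_S(\theta) \coloneqq \frac{1/\delta_\theta}{A_S}$ for $\theta \in S$ and $\xi_S(\theta) \coloneqq 0$ otherwise, with $A_S \coloneqq \sum_{\theta \in S} 1/\delta_\theta$; states $\theta$ with $\delta_\theta = 0$ never carry a positive expected valuation and can be dropped. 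Hence Problem~\ref{eq:separation} becomes $\max_{\emptyset \neq S \subseteq \Theta} f(S)$, with $f(S) \coloneqq \textsc{Rev}(V, \xi_S) - \sum_{\theta \in \Theta} y_\theta \, \xi_S(\theta)$.

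Next I would put $f(S)$ in closed form. Under $\xi_S$, every bidder in a group $\N_\theta$ with $\theta \in S$ has the same expected valuation $c_S \coloneqq 1/A_S$, while all remaining bidders have expected valuation $0$. Letting $N_S \coloneqq \sum_{\theta \in S} |\N_\theta|$, the $(j+1)$-th highest expected valuation equals $c_S$ precisely when $j \le N_S - 1$ and equals $0$ otherwise, so $\textsc{Rev}(V, \xi_S) = c_S \, g(N_S)$, where $g(k) \coloneqq \sum_{j = 1}^{\min\{m, k-1\}} j \, (\lambda_j - \lambda_{j+1})$ is a fixed, non-decreasing, poly-time-computable function of $k$ that is constant for $k \ge m+1$. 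Moreover $\sum_{\theta} y_\theta \, \xi_S(\theta) = c_S \, B_S$ with $B_S \coloneqq \sum_{\theta \in S} y_\theta / \delta_\theta \le 0$, so
\[
f(S) \;=\; \frac{g(N_S) - B_S}{A_S} ,
\]
and, using $c_S \le 1$, $\lambda_j \le 1$ and $|y_\theta| \le \beta$, one readily checks $0 \le f(S) \le m + \beta$.

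I would then maximize $f$ parametrically. For any $\rho \ge 0$, since $A_S > 0$ we have $f(S) \ge \rho$ if and only if $g(N_S) - \sum_{\theta \in S} (y_\theta + \rho)/\delta_\theta \ge 0$; hence, setting $\tilde h(\rho) \coloneqq \max_{\emptyset \neq S} \big( g(N_S) - \sum_{\theta \in S} (y_\theta + \rho)/\delta_\theta \big)$, the optimum $\rho^\star \coloneqq \max_S f(S)$ is the largest $\rho$ with $\tilde h(\rho) \ge 0$, and $\tilde h$ is non-increasing in $\rho$. For a fixed $\rho$, $\tilde h(\rho)$ — together with a maximizing subset $S$, via standard backtracking — is computed \emph{exactly} by a dynamic program that scans the states $\theta_1, \ldots, \theta_d$ one at a time and maintains, for each value of the truncated count $\min\{N_S, m+1\} \in \{0, \ldots, m+1\}$ and a bit recording whether $S$ is already non-empty, the largest attainable value of $-\sum_{\theta \in S} (y_\theta + \rho)/\delta_\theta$, adding the corresponding $g(\cdot)$ term at the very end (legitimate since $g$ is constant beyond $m+1$); this takes $O(dm)$ arithmetic operations on numbers of polynomial bit-size. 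Finally, binary-searching $\rho$ over $[0, m+\beta]$ for $O(\log \frac{m+\beta}{\lambda})$ iterations returns a threshold $\rho_{\mathrm{lo}}$ with $\tilde h(\rho_{\mathrm{lo}}) \ge 0$ and $\rho^\star - \rho_{\mathrm{lo}} \le \lambda$; the subset $S$ output by the dynamic program at $\rho = \rho_{\mathrm{lo}}$ then satisfies $f(S) \ge \rho_{\mathrm{lo}} \ge \rho^\star - \lambda$, so that $\xi_S \in \Xi^*$ attains in Problem~\ref{eq:separation} an objective value within $\lambda$ of the optimum. The overall running time is polynomial in the size of the input, in $\log \frac1\lambda$ and in $\log \beta$, hence in particular in $\frac1\lambda$ and in $\beta$.

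The main obstacle is the fractional, non-linear dependence of $f(S)$ on $S$ through the normalizer $A_S$: it is precisely this that rules out a straightforward additive dynamic program directly over the posteriors and forces the parametric reformulation. Care is needed to ensure that the dynamic program resolves the feasibility query ``$\tilde h(\rho) \ge 0$?'' \emph{exactly} (so that the subset recovered at the last feasible threshold is a genuine $\lambda$-approximate maximizer, not merely a certificate of the approximate value), and to handle the degenerate cases of empty groups $\N_\theta$ or states with $\delta_\theta = 0$, which are dealt with by restricting the collection of states allowed to enter $S$.
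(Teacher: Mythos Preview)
Your argument is correct, and it takes a genuinely different route from the paper. Both arguments start from the same observation that every $\xi\in\Xi^*$ is determined by a non-empty subset $S\subseteq\Theta$ (with $\delta_\theta>0$ on $S$) via $\xi_S(\theta)=\frac{1/\delta_\theta}{A_S}$, so that the separation objective becomes the fractional program $\max_S (g(N_S)-B_S)/A_S$. The paper then \emph{discretizes the common value} $c_S=1/A_S$: for each candidate value $v$ on a grid of step $\epsilon$, it runs a four-dimensional dynamic program over states that tracks both an (approximate) cumulative weight $w\approx\sum_{\theta\in S}v/\delta_\theta$ and the bidder count, and the additive error comes from rounding $v$ and $w$ to the grid. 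You instead apply a \emph{Dinkelbach-style parametric reduction}: binary search on the ratio $\rho$ and, for each $\rho$, decide $\tilde h(\rho)\ge 0$ \emph{exactly} with a two-dimensional DP that tracks only the truncated count $\min\{N_S,m{+}1\}$; the additive error comes solely from the binary-search precision. What this buys you is a cleaner inner problem (a separable additive objective with no normalizer), a smaller DP table ($O(dm)$ versus $O(d\,n/\epsilon^2)$), and a running time that is actually polynomial in $\log\frac{1}{\lambda}$ and $\log\beta$, strictly better than the $\frac{1}{\lambda}$ and $\beta$ dependence stated in the lemma. The paper's approach, by contrast, is more self-contained and does not rely on exact rational arithmetic for the feasibility test. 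Both proofs silently assume $\delta_\theta>0$ when dividing by $\delta_\theta$; your handling of this edge case is no less careful than the paper's.
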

The crucial observation that allows us to solve Problem~\ref{eq:separation} by means of dynamic programming is that, in any posterior $\xi \in \Xi^*$, bidders' expected valuations are either a positive, bidder-independent value or zero (see Lemma~\ref{lem:uniformVertices}).
This allows us to build a discretized range of possible bidders' valuation values, so that, for each discretized value, we can compute an optimal posterior $\xi \in \Xi^*$ inducing that value by adding states of nature incrementally in a dynamic programming fashion.


Since the algorithm in Lemma~\ref{lem:dp} only returns an approximate solution to Problem~\ref{eq:separation}, we need to carefully apply the ellipsoid algorithm to solve LP~\ref{eq:dual_lp_signaling}, so that it correctly works even with an approximated oracle.
Some non-trivial duality arguments allow us to prove that, indeed, this can be achieved by only incurring in a small additive loss on the quality of the returned solution, and without degrading the running time of the algorithm.
Overall, this allows us to conclude that:
\begin{restatable}{theorem}{fptasSingleMinded}\label{thm:fptasSingleMinded}
	In the KV setting, if the bidders are single minded, then the problem of computing an optimal signaling scheme admits an (additive) FPTAS. 
\end{restatable}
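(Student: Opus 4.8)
The plan is to reduce the problem to (approximately) solving LP~\ref{eq:lp_signaling} over the posterior set $\Xi^*=\bigcup_{\pi\in\Pi}V(\Xi_\pi)$. As noted just before the statement, Lemma~\ref{lem:uniformVertices} together with a Carathéodory argument analogous to the one for Lemma~\ref{lem:knownfixD} gives $OPT_{\Xi^*}=OPT$, so it suffices to find a near-optimal consistent distribution supported on $\Xi^*$. The difficulty is that $|\Xi^*|$ is exponential in $d$, so LP~\ref{eq:lp_signaling} has exponentially many variables and only $d+1$ constraints; I would therefore solve its dual by the ellipsoid method, using as separation oracle the additive-$\lambda$ dynamic program of Lemma~\ref{lem:dp}. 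Since the running time of that oracle is polynomial in $1/\lambda$ and in a bound $\beta$ on the magnitude of the dual variables, I would actually run the ellipsoid method on the dual LP~\ref{eq:dual_lp_signaling} of the \emph{relaxed} primal LP~\ref{eq:lp_signaling_relax_main}: there, Constraint~\eqref{eq:constraint_relaxed} together with $y\le 0$ confines the dual variables to $[-\beta,0]^d$, making Lemma~\ref{lem:dp} a genuine polynomial-time approximate separation oracle (the dual variable $t$ also lies in a polynomially bounded range, since $\textsc{Rev}(V,\xi)\le\sum_{j=1}^m\lambda_j\le m$).

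Concretely, I would fix $\beta$ polynomially large and $\lambda$ polynomially small (in the input size and $1/\epsilon$), to be calibrated below, and run the ellipsoid method on LP~\ref{eq:dual_lp_signaling} over the box $y\in[-\beta,0]^d$, $t$ in its polynomial range. At each iteration, I call the Lemma~\ref{lem:dp} oracle with accuracy $\lambda$: either it returns a posterior $\xi\in\Xi^*$ whose dual constraint is violated by more than $2\lambda$ --- which I record in a set $\hat\Xi$ and use as a separating hyperplane --- or it certifies that all dual constraints hold up to an additive $2\lambda$, in which case I stop. Standard ellipsoid bounds give termination in a polynomial number of iterations, so $|\hat\Xi|$ is polynomial. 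I then augment $\hat\Xi$ with a fixed polynomial-size subset of $\Xi^*$ whose convex hull is $\Delta_\Theta$ (e.g.\ the Dirac posteriors, under the mild assumption $\delta_\theta>0$ for all $\theta$), so that LP~\ref{eq:lp_signaling} over $\hat\Xi$ is feasible, solve this polynomial-size \emph{original} (consistent) LP exactly, and output the resulting signaling scheme, which is consistent by construction.

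For correctness I would chain the following. First, $OPT=OPT_{\Xi^*}$. Second, writing $\widetilde{OPT}_{\Xi}$ for the optimum of LP~\ref{eq:lp_signaling_relax_main} with posterior set $\Xi$, one has $\widetilde{OPT}_{\Xi^*}\ge OPT_{\Xi^*}=OPT$ since every consistent scheme is relaxed-feasible with $z=0$; and the ellipsoid run with the additive-$\lambda$ oracle pins down the value of LP~\ref{eq:dual_lp_signaling}, and hence (by LP duality) the value of its relaxed primal restricted to $\hat\Xi$, up to an additive $O(\lambda)$, so $\widetilde{OPT}_{\hat\Xi}\ge OPT-O(\lambda)$. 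Third, a relaxation-gap bound: since LP~\ref{eq:lp_signaling} over $\hat\Xi$ is feasible with nonnegative value and $\textsc{Rev}(V,\cdot)\le m$, the optimum $(\gamma^\star,z^\star)$ of the relaxed LP over $\hat\Xi$ has $|z^\star|\le m/\beta$, and a near-optimal relaxed solution, being only slightly inconsistent, can be turned into a feasible solution of LP~\ref{eq:lp_signaling} over the Dirac-augmented $\hat\Xi$ at the price of a revenue loss bounded by some $g(\beta)=O(\mathrm{poly}/\beta)$. Combining these, the value of LP~\ref{eq:lp_signaling} over $\hat\Xi$ is at least $OPT-O(\lambda)-g(\beta)$, which is $\ge OPT-\epsilon$ for $\lambda=\Theta(\epsilon)$ and $\beta$ a suitable polynomial in the input size and $1/\epsilon$. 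With these settings the oracle calls, the polynomially many ellipsoid iterations, and the final polynomial-size LP all run in time polynomial in the input size and $1/\epsilon$, giving the claimed additive FPTAS.

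The step I expect to be the main obstacle --- and precisely what the sentence before the statement alludes to --- is establishing correctness of the ellipsoid method when it has access only to the additive-$\lambda$ oracle of Lemma~\ref{lem:dp}: one has to argue that the approximate feasibility certificates it may return (declaring a slightly infeasible dual point feasible) only cost $O(\lambda)$ in the recovered objective, rather than an uncontrolled amount, and that the polynomially many cuts collected along the way genuinely certify near-optimality of the relaxed primal restricted to $\hat\Xi$ --- both of which hinge on the relaxed formulation (bounded $y$, controllable slack $z$) and on careful duality bookkeeping. A secondary but still delicate point is the relaxation-gap estimate: converting a near-optimal, slightly inconsistent scheme into a genuinely consistent one with revenue loss $O(g(\beta))$, uniformly over priors, is not a one-liner, because the optimal revenue as a function of the prior need not be Lipschitz with a polynomial constant near the boundary of the simplex.
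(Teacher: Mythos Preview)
Your proposal is correct and follows essentially the same route as the paper: reduce to LP~\ref{eq:lp_signaling} over $\Xi^*$, pass to the relaxed primal/dual pair so that $y\in[-\beta,0]^d$, run the ellipsoid method with the additive-$\lambda$ oracle of Lemma~\ref{lem:dp}, collect the violated constraints into a polynomial-size set, solve the reduced primal, and bound the relaxation gap by $O(\mathrm{poly}/\beta)$. The only substantive thing the paper spells out that you leave implicit is \emph{how} the approximate oracle ``pins down'' the dual value: the paper wraps the ellipsoid calls in a bisection on the objective $\rho$, uses approximate feasibility at $\rho^\star+\eta$ to get $OPT\le\rho^\star+\eta+\lambda$, and uses the cuts $H^*$ from the \emph{infeasible} run at $\rho^\star$ (rather than the feasible one) to certify that the reduced relaxed primal has value at least $\rho^\star$ --- this is exactly the ``careful duality bookkeeping'' you flagged as the main obstacle.
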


\section{RV Setting}\label{sec:unknowm_valuations}

In this setting, as stated in Section~\ref{sec:preliminaries}, we assume that the auction mechanism has access to the distribution of bidders' valuations $\V$ only through a black-box sampling oracle.
In the following, given $s \in \mathbb{N}_{>0}$ i.i.d. samples of matrices of bidders' valuations, namely $V_1, \ldots, V_s \in [0,1]^{n\times d}$, we let $\V^s$ be their empirical distribution, which is such that $\mathbb{P}_{V \sim \V^s} \left\{  V = \hat V \right\} \coloneqq \frac{\sum_{t = 1}^s \mathbbm{1} \{ V_t = \hat V \} }{s}$ for all $\hat V \in [0,1]^{n \times d}$.

In this section, we first study the parametrized complexity of the problem of computing an optimal signaling scheme in general auctions (Section~\ref{subsec:unknown_parametrized}), and, then, we address special auction settings in which the bidders' valuations are \emph{bounded away from zero}, namely $v_i(\theta) > \delta$ for all $i \in\ N$ and $\theta \in \Theta$, for some threshold $\delta > 0$.
In the latter case, we show that the problem admits a QPTAS and the result is tight (Section~\ref{subsec:unknown_non_zero}).

Before stating our main results (Theorems~\ref{lem:UnkownFixedDThm},~\ref{lem:UnkownFixedKThm},~\ref{lem:UnkownFixedKHigh},~and~\ref{lem:UnkownFixedKHighTight}), we introduce some preliminary useful lemmas.
The first one (Lemma~\ref{lem:rvFinite}) works under the true distribution of bidders' valuations $\V$, and it establishes a connection between the optimal expected revenue ($OPT$) and the optimal value of LP~\ref{eq:lp_signaling} for suitably-defined finite sets $\Xi \subseteq \Delta_{\Theta}$ of posteriors ($OPT_{\Xi}$).
In particular, we look at sets $\Xi \subseteq \Delta_{\Theta}$ for which the function $\textsc{Rev}(\V, \cdot)$ is ``\emph{stable}'' according to the following definition:\footnote{Notions of stability analogous to that in Definition~\ref{def:stability} have already been used in the literature; see, \emph{e.g.},~\cite{cheng2015mixture}.}
\begin{definition}[$(\alpha, \varepsilon)$-stability]\label{def:stability}
	Given $\alpha, \varepsilon \geq 0$ and a finite set $\Xi \subseteq \Delta_{\Theta}$, we say that $\textsc{Rev}(\V, \cdot)$ is \emph{$(\alpha, \varepsilon)$-stable} for $\Xi$ if, for every $ \xi \in \Delta_{\Theta}$, there exists a distribution $\gamma_\xi \in \Delta_{\Xi}$ such that:
	\begin{equation}\label{eq:prop_finite_set}
	\sum_{\xi' \in \Xi} \gamma_\xi (\xi') \textsc{Rev}(\mathcal{V}, \xi') \ge (1-\alpha)\textsc{Rev}(\mathcal{V}, \xi) - \varepsilon.
	\end{equation}
\end{definition}
%
%
For any finite set $\Xi \subseteq \Delta_{\Theta}$ such that $\textsc{Rev}(\V, \cdot)$ is \emph{$(\alpha, \varepsilon)$-stable} for $\Xi$, starting from an optimal signaling scheme $\gamma$ one can recover an optimal solution to LP~\ref{eq:lp_signaling}, only incurring in ``small'' multiplicative and additive losses in the expected revenue, respectively of $1-\alpha$ and $\varepsilon$.
This can be accomplished by decomposing each posterior $\xi \in \text{supp}(\gamma)$ into $\gamma_\xi \in \Delta_{\Xi}$ and, then, putting such distributions together.
These observations allow us to prove the following lemma:
\begin{restatable}{lemma}{rvFinite}\label{lem:rvFinite}
	Given $\alpha, \varepsilon \geq 0$ and $\Xi \subseteq \Delta_{\Theta}$ such that $\textsc{Rev}(\V, \cdot)$ is $(\alpha, \varepsilon)$-stable for $\Xi$, it holds $OPT_{\Xi} \geq (1-\alpha)OPT-\varepsilon$.
\end{restatable}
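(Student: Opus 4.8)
The plan is to take an optimal signaling scheme $\gamma^\star$ for the original (unrestricted) problem, so that its expected revenue equals $OPT$, and to transform it into a feasible solution of LP~\ref{eq:lp_signaling} for the set $\Xi$ whose objective value is at least $(1-\alpha)OPT - \varepsilon$. Since $OPT_\Xi$ is the optimum of that LP, exhibiting one feasible solution of this value suffices. Concretely, write $\gamma^\star$ as a consistent distribution over its support, $\text{supp}(\gamma^\star) \subseteq \Delta_\Theta$, and for each posterior $\xi \in \text{supp}(\gamma^\star)$ invoke $(\alpha,\varepsilon)$-stability to obtain a distribution $\gamma_\xi \in \Delta_\Xi$ satisfying Equation~\eqref{eq:prop_finite_set}. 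Then define the candidate solution $\bar\gamma \in \Delta_\Xi$ by mixing these according to the weights of $\gamma^\star$, i.e.
\[
\bar\gamma(\xi') \coloneqq \sum_{\xi \in \text{supp}(\gamma^\star)} \gamma^\star(\xi)\, \gamma_\xi(\xi') \qquad \forall \xi' \in \Xi.
\]

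First I would check that $\bar\gamma$ is feasible for LP~\ref{eq:lp_signaling}, which amounts to verifying the consistency constraint~\eqref{eq:lp_signaling_cons}. For each $\theta \in \Theta$,
\[
\sum_{\xi' \in \Xi} \bar\gamma(\xi')\, \xi'(\theta) = \sum_{\xi \in \text{supp}(\gamma^\star)} \gamma^\star(\xi) \sum_{\xi' \in \Xi} \gamma_\xi(\xi')\, \xi'(\theta) = \sum_{\xi \in \text{supp}(\gamma^\star)} \gamma^\star(\xi)\, \xi(\theta) = \mu_\theta,
\]
where the middle equality uses that each $\gamma_\xi$ is itself a consistent decomposition of $\xi$ (this barycenter property is implicit in the definition of a distribution $\gamma_\xi \in \Delta_\Xi$ recovering $\xi$; if the stability definition only guarantees the revenue inequality and not the barycenter identity, I would instead use the relaxed LP~\ref{eq:lp_signaling_relax_main} or note that the mean of $\gamma_\xi$ can be assumed to equal $\xi$ without loss), and the last equality is the consistency of $\gamma^\star$ itself. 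Also $\bar\gamma$ is nonnegative and sums to $1$, being a convex combination of probability distributions, so $\bar\gamma \in \Delta_\Xi$.

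Next I would bound the objective value of $\bar\gamma$ from below. By definition of $\bar\gamma$ and linearity,
\[
\sum_{\xi' \in \Xi} \bar\gamma(\xi')\, \textsc{Rev}(\V,\xi') = \sum_{\xi \in \text{supp}(\gamma^\star)} \gamma^\star(\xi) \sum_{\xi' \in \Xi} \gamma_\xi(\xi')\, \textsc{Rev}(\V,\xi') \ge \sum_{\xi \in \text{supp}(\gamma^\star)} \gamma^\star(\xi)\big[(1-\alpha)\textsc{Rev}(\V,\xi) - \varepsilon\big],
\]
applying Equation~\eqref{eq:prop_finite_set} termwise. Since $\sum_\xi \gamma^\star(\xi) = 1$, the right-hand side equals $(1-\alpha)\sum_\xi \gamma^\star(\xi)\textsc{Rev}(\V,\xi) - \varepsilon = (1-\alpha)OPT - \varepsilon$. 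Hence $OPT_\Xi \ge (1-\alpha)OPT - \varepsilon$, which is the claim.

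The argument is essentially bookkeeping, so I do not expect a genuine obstacle; the one subtle point deserving care is the barycenter/consistency step. Specifically, the stability definition as written only asserts the existence of $\gamma_\xi$ satisfying the revenue inequality~\eqref{eq:prop_finite_set}, and it is the consistency constraint~\eqref{eq:lp_signaling_cons} of the restricted LP that must be matched. I would either (a) argue that one may always replace $\gamma_\xi$ by a distribution with the same (or larger) revenue whose barycenter is exactly $\xi$ — e.g. by the usual splitting/merging argument used for Lemma~\ref{lem:support}, exploiting that $\Xi$ can be taken to contain enough posteriors, or that $\textsc{Rev}(\V,\cdot)$ is convex on the relevant region so that pushing mass toward $\xi$ does not hurt — or (b) observe that the statement is really about the relaxed LP~\ref{eq:lp_signaling_relax_main} where the equality constraint is slackened, in which case feasibility is immediate and the additive slack is absorbed into $\varepsilon$. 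Either way the bound $OPT_\Xi \ge (1-\alpha)OPT - \varepsilon$ goes through.
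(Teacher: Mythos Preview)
Your construction and revenue bound are exactly the paper's: define $\bar\gamma(\xi')=\sum_\xi \gamma^\star(\xi)\gamma_\xi(\xi')$, then apply~\eqref{eq:prop_finite_set} termwise. The paper's proof is line-for-line the same, except that where you pause on the consistency step, the paper simply writes ``it is easy to see that $\gamma^*$ satisfies the consistency constraints.''

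You are right that this step is subtle, and in fact the paper glosses over exactly the point you flag: Definition~\ref{def:stability} as written demands only the revenue inequality~\eqref{eq:prop_finite_set}, not the barycenter identity $\sum_{\xi'}\gamma_\xi(\xi')\xi'=\xi$. Your proposed workarounds (a) and (b), however, are heading in the wrong direction. Option (a) --- pushing mass back toward $\xi$ via convexity or a splitting argument --- does not work in general (\textsc{Rev} is piecewise linear, not convex, and $\Xi$ is a fixed finite set you cannot enlarge). Option (b) --- retreating to the relaxed LP~\ref{eq:lp_signaling_relax_main} --- is unrelated; that LP appears only in the single-minded-bidders section for a different purpose. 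The actual resolution is simpler: in every instance where the paper invokes stability (Lemmas~\ref{lem:UnkownFixedKLem} and~\ref{lem:UnkownFixedDLem}), the distribution $\gamma_\xi$ is \emph{constructed} to have mean exactly $\xi$ (by i.i.d.\ sampling in the first case, by Carath\'eodory decomposition in the second). So the barycenter property is an implicit part of the intended definition, and once you assume it your consistency computation is immediate and the proof is complete.
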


The second lemma (Lemma~\ref{lem:rvSamples}) deals with the approximation error introduced by using an empirical distribution of bidders' valuations $\V^s$, rather than the actual distribution $\V$.
Given a finite set $\Xi \subseteq \Delta_\Theta$ of posteriors, let $\gamma_{\V^s} \in \Delta_{\Xi}$ be an optimal solution to LP~\ref{eq:lp_signaling} for distribution $\V^s$ and set $\Xi$.
Moreover, let $OPT_{\Xi,s} \coloneqq \mathbb{E} \left[ \sum_{\xi \in \Xi} \gamma_{\V^s} (\xi) \textsc{Rev}(\V, \xi) \right]$ be the average expected revenue of signaling schemes $\gamma_{\V^s}$ under the true distribution of valuations $\V$, where the expectation is with respect to the sampling procedure that determines $\V^s$.
Then, a concentration argument proves the following:
\begin{restatable}{lemma}{rvSamples}\label{lem:rvSamples}
	Given $\rho, \tau > 0$, let $\Xi \subseteq \Delta_{\Theta}$ be finite and $s \coloneqq \left\lceil \frac{2(\lambda_1 m)^2}{ \tau^2}\log \frac{2}{\rho} \right\rceil$, $OPT_{\Xi,s} \geq \left( 1-\rho |\Xi| \right) OPT_{\Xi} -\tau$.
	%
\end{restatable}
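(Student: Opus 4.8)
The plan is to show that, with high probability over the draw of the $s$ samples, the linear functional $\Rev(\V^s,\cdot)$ is uniformly close to $\Rev(\V,\cdot)$ on the finite set $\Xi$, and then to run a short deterministic LP argument conditioned on this ``good'' event. First I would record that, for every matrix $V$ and every $\xi\in\Delta_\Theta$, we have $\Rev(V,\xi)=\sum_{j=1}^m j\,\xi^\top v_{j+1}(\lambda_j-\lambda_{j+1})\in[0,\lambda_1 m]$: indeed $\xi^\top v_{j+1}\in[0,1]$ since $v_{j+1}\in[0,1]^d$ and $\xi$ is a distribution, and $\sum_{j=1}^m j(\lambda_j-\lambda_{j+1})=\sum_{j=1}^m\lambda_j\le\lambda_1 m$ (using $\lambda_{m+1}=0$ and $\lambda_1\ge\cdots\ge\lambda_m\ge 0$). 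Hence, for each fixed $\xi$, $\Rev(\V^s,\xi)=\frac1s\sum_{t=1}^s\Rev(V_t,\xi)$ is an average of $s$ i.i.d.\ random variables taking values in an interval of width at most $\lambda_1 m$ and with mean $\Rev(\V,\xi)$.

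Applying Hoeffding's inequality to each $\xi\in\Xi$ then gives $\pr\{|\Rev(\V^s,\xi)-\Rev(\V,\xi)|\ge\tau/2\}\le 2\exp\bigl(-s\tau^2/(2(\lambda_1 m)^2)\bigr)$, which is at most $\rho$ by the choice $s=\lceil 2(\lambda_1 m)^2\tau^{-2}\log(2/\rho)\rceil$. A union bound over the finitely many posteriors in $\Xi$ shows that the event $E\coloneqq\{\,|\Rev(\V^s,\xi)-\Rev(\V,\xi)|<\tau/2 \text{ for all }\xi\in\Xi\,\}$ has probability at least $1-\rho|\Xi|$.

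The key point for the LP argument is that the feasible region of LP~\ref{eq:lp_signaling} for the set $\Xi$ (i.e.\ $\gamma\in\Delta_\Xi$ satisfying the consistency constraints~\eqref{eq:lp_signaling_cons}) does not depend on the valuations, so $\gamma_{\V^s}$ and an optimal solution $\gamma^*$ of LP~\ref{eq:lp_signaling} for $\V$ are both feasible for both versions of the LP. On $E$ we have $\bigl|\sum_\xi\gamma(\xi)\Rev(\V,\xi)-\sum_\xi\gamma(\xi)\Rev(\V^s,\xi)\bigr|<\tau/2$ for every feasible $\gamma$, so chaining this with the optimality of $\gamma_{\V^s}$ for $\V^s$ yields, on $E$,
\[
\textstyle\sum_\xi\gamma_{\V^s}(\xi)\Rev(\V,\xi)\ \ge\ \sum_\xi\gamma_{\V^s}(\xi)\Rev(\V^s,\xi)-\tfrac\tau2\ \ge\ \sum_\xi\gamma^*(\xi)\Rev(\V^s,\xi)-\tfrac\tau2\ \ge\ \sum_\xi\gamma^*(\xi)\Rev(\V,\xi)-\tau\ =\ OPT_{\Xi}-\tau .
\]
Since $\sum_\xi\gamma_{\V^s}(\xi)\Rev(\V,\xi)\ge 0$ always (as $\Rev\ge 0$ and $\gamma_{\V^s}$ is a distribution), taking expectations over the sampling gives $OPT_{\Xi,s}\ge\pr(E)\,(OPT_{\Xi}-\tau)\ge(1-\rho|\Xi|)(OPT_{\Xi}-\tau)\ge(1-\rho|\Xi|)OPT_{\Xi}-\tau$, using $OPT_{\Xi}\ge 0$; the degenerate cases $\rho|\Xi|>1$ or $OPT_{\Xi}<\tau$ follow at once from $OPT_{\Xi,s}\ge 0$.

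The main obstacle is that the estimate must survive \emph{two} approximation steps — first replacing $\V$ by $\V^s$ inside the optimal value, and then evaluating the resulting optimizer $\gamma_{\V^s}$ back under $\V$ — which is precisely why the per-posterior deviation has to be split into two halves of size $\tau/2$, and why $s$ carries the extra constant factor; a minor but necessary subtlety is combining the conditional bound on $E$ with the trivial nonnegativity of the revenue on $E^c$ so as to land on exactly the stated additive form $(1-\rho|\Xi|)OPT_{\Xi}-\tau$ rather than a version with a multiplicative factor in front of $\tau$.
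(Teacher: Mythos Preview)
Your proof is correct and follows essentially the same approach as the paper: bound $\Rev(V,\xi)\in[0,\lambda_1 m]$, apply Hoeffding per posterior with deviation $\tau/2$, take a union bound over $\Xi$, chain the two $\tau/2$-approximations through the optimality of $\gamma_{\V^s}$, and integrate against the probability of the good event. If anything, your write-up is slightly more careful than the paper's in justifying the range of $\Rev$ and in handling the contribution from the complement of the good event.
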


Finally, the last lemma (Lemma~\ref{lem:UnkownFixedKLem}) exploits Lemma~\ref{lem:rvFinite} to provide two useful bounds on the value of $OPT_{ \Xi_q}$, where $\Xi_q \subseteq \Delta_{\Theta}$ (for a given $q \in \mathbb{N}_{>0}$) is the finite set of all the $q$-uniform posteriors, according to the following definition:
\begin{definition}[$q$-uniform posterior]
	Given $q \in \mathbb{N}_{>0}$, a posterior $\xi \in \Delta_{\Theta}$ is \emph{$q$-uniform} if each $\xi(\theta)$ is a multiple of $\frac{1}{q}$.
\end{definition}
Notice that the set $\Xi_q$ has size $|\Xi^q| \le \min \{ d^q, q^d \}$.
The two points in the following lemma are readily proved by applying Lemma~\ref{lem:rvFinite}, after noticing that the sets $\Xi_q$ in the statement are such that the function $\textsc{Rev}(\V, \cdot)$ is {$(\alpha, \varepsilon)$-stable} for them, with suitable values of $\alpha \geq 0$ and $\varepsilon \geq 0$.
Formally:
\begin{restatable}{lemma}{UnkownFixedKLem}\label{lem:UnkownFixedKLem}
	Given $\eta > 0$ and $q\coloneqq \left\lceil \frac{1}{2 \eta^2}\log \frac{m+1}{\eta } \right\rceil$, it holds:
	\begin{enumerate}
		\item[(i)] $OPT_{\Xi_q} \geq OPT- 2\eta m$;
		\item[(ii)] if, for some $\delta > 0$, it is the case that $v_i(\theta) > \delta$ for all $i \in \N$ and $\theta \in \Theta$, then $OPT_{\Xi_q} \geq ( 1- \frac{\eta}{\delta} )^2 \, OPT$.
	\end{enumerate}
	%
	%
\end{restatable}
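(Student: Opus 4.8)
The plan is to show that the sets $\Xi_q$ in the statement fulfil the hypothesis of Lemma~\ref{lem:rvFinite}, namely that $\textsc{Rev}(\V,\cdot)$ is $(\alpha,\varepsilon)$-stable for $\Xi_q$, with $(\alpha,\varepsilon)=(0,2\eta m)$ for point (i) and with $(\alpha,\varepsilon)=\big(1-(1-\eta/\delta)^2,\,0\big)$ for point (ii); given this, both inequalities are immediate consequences of Lemma~\ref{lem:rvFinite}. To prove stability, fix an arbitrary $\xi\in\Delta_\Theta$ and let $\gamma_\xi\in\Delta_{\Xi_q}$ be the law of the \emph{empirical posterior} $\hat\xi$ obtained from $q$ i.i.d.\ draws $\theta_1,\dots,\theta_q\sim\xi$ (so $\hat\xi(\theta)$ is the fraction of the $\theta_r$ equal to $\theta$). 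This is supported on $q$-uniform posteriors, and since $\mathbb{E}[\hat\xi(\theta)]=\xi(\theta)$ it is consistent with $\xi$, which is exactly what is needed when the distributions $\gamma_\xi$ are later glued together as in Lemma~\ref{lem:rvFinite}. Because $\textsc{Rev}(\V,\hat\xi)=\mathbb{E}_{V\sim\V}[\textsc{Rev}(V,\hat\xi)]$ and expectations commute, it suffices to lower bound $\mathbb{E}_{\hat\xi}[\textsc{Rev}(V,\hat\xi)]$ for a fixed valuation matrix $V$.

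The crux --- and the step I expect to carry the weight of the argument --- is to bound the revenue \emph{without} resorting to a crude Lipschitz estimate on $\xi\mapsto\textsc{Rev}(V,\xi)$, which would let $\|\hat\xi-\xi\|_1$ (and hence $d$) into the bound; instead I would use a one-sided order-statistic inequality localized to the top $m+1$ bidders at $\xi$. Concretely, order the bidders by decreasing $\xi^\top v_i$ via a permutation $\sigma$, put $b^\star_k\coloneqq\xi^\top v_{\sigma(k)}$, and let $b'_k$ be the $k$-th largest among $\{\hat\xi^\top v_i\}_{i\in\N}$. Since the $k$ bidders $\sigma(1),\dots,\sigma(k)$ each bid at least $\min_{l\le k}\hat\xi^\top v_{\sigma(l)}$ at $\hat\xi$, we get $b'_k\ge\min_{l\le k}\hat\xi^\top v_{\sigma(l)}$. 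Each $\hat\xi^\top v_{\sigma(l)}$ is an average of $q$ i.i.d.\ terms in $[0,1]$ with mean $b^\star_l$, so Hoeffding's inequality and a union bound over $l\in\{1,\dots,m+1\}$ produce an event $E$ with $\pr[\neg E]\le(m+1)e^{-2q\eta^2}\le\eta$ --- the last step being precisely the choice $q=\lceil\frac{1}{2\eta^2}\log\frac{m+1}{\eta}\rceil$ --- on which $\hat\xi^\top v_{\sigma(l)}\ge b^\star_l-\eta$ for all $l\le m+1$, and hence (by monotonicity of $k\mapsto b^\star_k$) $b'_k\ge b^\star_k-\eta$ for all $k\le m+1$. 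Recalling $\textsc{Rev}(V,\xi)=\sum_{j=1}^m j\,\xi^\top v_{\sigma(j+1)}(\lambda_j-\lambda_{j+1})$ and $\sum_{j=1}^m j(\lambda_j-\lambda_{j+1})=\sum_{j=1}^m\lambda_j\le\lambda_1 m\le m$, this gives $\textsc{Rev}(V,\hat\xi)\ge\textsc{Rev}(V,\xi)-\eta m$ on $E$, while trivially $\textsc{Rev}(V,\hat\xi)\ge0\ge\textsc{Rev}(V,\xi)-\lambda_1 m\ge\textsc{Rev}(V,\xi)-m$ off $E$.

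For point (i), taking the expectation over $\hat\xi$ and then over $V$ gives $\mathbb{E}_{\hat\xi}[\textsc{Rev}(\V,\hat\xi)]\ge\textsc{Rev}(\V,\xi)-\eta m-\pr[\neg E]\,m\ge\textsc{Rev}(\V,\xi)-2\eta m$, i.e.\ $(0,2\eta m)$-stability, so Lemma~\ref{lem:rvFinite} yields $OPT_{\Xi_q}\ge OPT-2\eta m$. For point (ii), the hypothesis $v_i(\theta)>\delta$ forces every bid (in particular every $b^\star_k$) to exceed $\delta$, so on $E$ the additive bound upgrades to $b'_k\ge b^\star_k-\eta\ge(1-\eta/\delta)b^\star_k$ and hence $\textsc{Rev}(V,\hat\xi)\ge(1-\eta/\delta)\textsc{Rev}(V,\xi)$; combining this with $\pr[E]\ge1-\eta\ge1-\eta/\delta$ (using $\delta\le1$) and $\textsc{Rev}(V,\hat\xi)\ge0$ off $E$, and then averaging over $V$, we obtain $\mathbb{E}_{\hat\xi}[\textsc{Rev}(\V,\hat\xi)]\ge(1-\eta)(1-\eta/\delta)\textsc{Rev}(\V,\xi)\ge(1-\eta/\delta)^2\textsc{Rev}(\V,\xi)$, i.e.\ $\big(1-(1-\eta/\delta)^2,0\big)$-stability, and Lemma~\ref{lem:rvFinite} yields $OPT_{\Xi_q}\ge(1-\eta/\delta)^2 OPT$. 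Beyond the localization idea, the only real care needed is in absorbing the contribution of $\neg E$: additively into the $2\eta m$ slack in (i), and into the leftover multiplicative factor $1-\eta$ in (ii).
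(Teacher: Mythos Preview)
Your proof is correct and follows essentially the same approach as the paper: define $\gamma_\xi$ as the law of the empirical posterior from $q$ i.i.d.\ samples, use Hoeffding plus a union bound over the top $m+1$ bidders to get the good event $E$ with probability at least $1-\eta$, and then read off $(0,2\eta m)$-stability for (i) and $\big(1-(1-\eta/\delta)^2,0\big)$-stability for (ii) before invoking Lemma~\ref{lem:rvFinite}. Your explicit order-statistic step ($b'_k\ge\min_{l\le k}\hat\xi^\top v_{\sigma(l)}\ge b^\star_k-\eta$) is in fact a cleaner justification of the key inequality than the paper's somewhat implicit passage from ``each $\tilde\xi^\top v_{i_j}\ge \xi^\top v_{i_j}-\eta$'' to a revenue lower bound at $\tilde\xi$, but the underlying idea is the same.
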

%

\subsection{Parametrized Complexity}\label{subsec:unknown_parametrized}

First, we study the computational complexity of the problem of computing an optimal signaling scheme when the number of states $d$ is fixed.
We provide an (additive) FPTAS that works by performing the following two steps: (i) it collects a suitable number $s \in \mathbb{N}_{>0}$ of matrices of bidders' valuations, by invoking the sampling oracle; and (ii) it solves LP~\ref{eq:lp_signaling} for the resulting empirical distribution $\V^s$ and a suitably-defined set of $q$-uniform posteriors.
In particular, given a desired (additive) error $\lambda >0$, the algorithm works on the set $\Xi_q$ for $q =  \lceil \frac{md}{\lambda}  \rceil$ and its approximation guarantees rely on the following Lemma~\ref{lem:UnkownFixedDLem}, proved again by means of Lemma~\ref{lem:rvFinite}.
\begin{restatable}{lemma}{UnkownFixedDLem}\label{lem:UnkownFixedDLem}
	For $\lambda>0$ and $q =  \lceil \frac{md}{\lambda}  \rceil$, $OPT_{\Xi_q} \hspace{-1mm} \geq \hspace{-1mm} OPT -\lambda$. 
\end{restatable}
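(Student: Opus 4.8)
The plan is to apply Lemma~\ref{lem:rvFinite} with carefully chosen parameters $\alpha$ and $\varepsilon$, which reduces the claim to exhibiting the right $(\alpha,\varepsilon)$-stability of $\textsc{Rev}(\V,\cdot)$ for the set $\Xi_q$ of $q$-uniform posteriors when $q = \lceil md/\lambda \rceil$. Concretely, I would aim to show that $\textsc{Rev}(\V,\cdot)$ is $(0,\lambda)$-stable for $\Xi_q$, i.e. that for every $\xi \in \Delta_\Theta$ there is a distribution $\gamma_\xi \in \Delta_{\Xi_q}$ with $\sum_{\xi' \in \Xi_q} \gamma_\xi(\xi')\,\textsc{Rev}(\V,\xi') \ge \textsc{Rev}(\V,\xi) - \lambda$; then Lemma~\ref{lem:rvFinite} immediately gives $OPT_{\Xi_q} \ge OPT - \lambda$.

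The key step is the rounding construction. Given an arbitrary $\xi \in \Delta_\Theta$, I would round it to a random $q$-uniform posterior $\hat\xi$ in a way that preserves the mean: a standard approach is to write $q\xi(\theta) = \lfloor q\xi(\theta)\rfloor + f_\theta$ with $f_\theta \in [0,1)$ the fractional parts (which sum to an integer), and then distribute the "leftover" $\sum_\theta f_\theta$ units of mass randomly among the coordinates with probabilities proportional to the $f_\theta$, so that $\mathbb{E}[\hat\xi(\theta)] = \xi(\theta)$ for every $\theta$. Alternatively, one can use a dependent-rounding / pipage-style scheme, or simply invoke a known lemma producing a distribution over $q$-uniform points with the correct barycenter and each realization within $\ell_\infty$-distance $O(1/q)$ of $\xi$ — since the excerpt permits assuming earlier stability-type machinery (cf. the remark referencing \cite{cheng2015mixture}), I would keep this at the level of "there exists $\gamma_\xi \in \Delta_{\Xi_q}$ with barycenter $\xi$." The linearity of $\textsc{Rev}(\V,\cdot)$ is the crucial structural fact here: $\textsc{Rev}(V,\xi) = \sum_{j=1}^m j\,\xi^\top v_{j+1}(\lambda_j - \lambda_{j+1})$ is linear in $\xi$ on each ordering region $\Xi_\pi$, but NOT globally linear, so I cannot simply push the expectation through. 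Instead I would bound the per-posterior error: $|\textsc{Rev}(\V,\hat\xi) - \textsc{Rev}(\V,\xi)|$ is controlled because $\textsc{Rev}(\V,\cdot)$ is Lipschitz — changing $\xi$ by $\varepsilon$ in each coordinate changes each bidder's expected valuation by at most $\varepsilon d$ (actually at most the $\ell_1$ change, bounded by $d/q$ times a constant), and the revenue is a weighted sum of at most $m$ such terms with weights $j(\lambda_j-\lambda_{j+1})$ telescoping to at most $\lambda_1 m \le m$. So if each realization $\hat\xi$ satisfies $\|\hat\xi - \xi\|_1 \le d/q$, then $|\textsc{Rev}(\V,\hat\xi) - \textsc{Rev}(\V,\xi)| \le md/q \le \lambda$ by the choice of $q$, and taking expectations over $\gamma_\xi$ preserves this bound, giving $(0,\lambda)$-stability.

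Putting it together: combine the rounding guarantee ($\gamma_\xi$ has barycenter $\xi$, each support point within $\ell_1$-distance $d/q$) with the Lipschitz bound on $\textsc{Rev}(\V,\cdot)$ to conclude $(0,\lambda)$-stability of $\Xi_q$, then invoke Lemma~\ref{lem:rvFinite} to get $OPT_{\Xi_q} \ge OPT - \lambda$. The main obstacle — and the only subtle point — is making the Lipschitz estimate tight enough, i.e. being careful about whether the relevant norm is $\ell_1$ or $\ell_\infty$ and whether the constant picks up an extra factor of $d$ or $m$; this determines whether $q = \lceil md/\lambda\rceil$ exactly suffices or whether the stated bound should carry a modest constant. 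I would double-check that the rounding can be done with $\ell_1$-error at most $d/q$ (each of the $d$ coordinates moves by less than $1/q$), so that the revenue error is at most $(\text{total weight } \le m) \times (d/q) = md/q \le \lambda$, matching the statement exactly.
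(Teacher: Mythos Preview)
Your proposal is correct and follows essentially the same approach as the paper: establish that $\textsc{Rev}(\V,\cdot)$ is $(0,\lambda)$-stable for $\Xi_q$ via a Lipschitz bound together with a decomposition of each $\xi$ into nearby $q$-uniform posteriors with barycenter $\xi$, then invoke Lemma~\ref{lem:rvFinite}. The only cosmetic difference is in the construction of $\gamma_\xi$: the paper observes directly that $\xi$ lies in the convex hull of the grid points within $\ell_\infty$-distance $1/q$ and applies Carath\'eodory's theorem, whereas you sketch a randomized rounding---but since you ultimately rely on the same per-realization $\ell_\infty$ (equivalently $\ell_1$) closeness, the Carath\'eodory route is the cleaner way to guarantee simultaneously the barycenter condition and the pointwise distance bound you need.
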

%
%
%
%

Thanks to Lemmas~\ref{lem:rvSamples}~and~\ref{lem:UnkownFixedDLem} (the former applied for suitable values $\rho, \tau >0$), we can prove that the procedure described in steps (i) and (ii) above gives a signaling scheme achieving an expected revenue at most a function of $\lambda$ lower than $OPT$, provided that the number of samples $s$ is defined as in Lemma~\ref{lem:UnkownFixedDLem}.
Moreover, let us notice that, since $|\Xi_q| = O(q^d)= O ( (\frac{1}{\lambda}md)^d)$, if $d$ is fixed, then the overall procedure runs in time polynomial in the input size and in $\frac{1}{\lambda}$.
%
Thus, we can conclude that:
\begin{restatable}{theorem}{UnkownFixedDThm}\label{lem:UnkownFixedDThm}
	In the RV setting, if the number of states $d$ is fixed, then the problem of computing an optimal signaling scheme admits and (additive) FPTAS.
	%
\end{restatable}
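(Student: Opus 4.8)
The plan is to turn the three preliminary lemmas into a single two-phase algorithm --- discretize the posteriors, then replace $\V$ by an empirical distribution --- and then verify that all parameters remain polynomial once $d$ is a constant. Fix a target additive error $\lambda > 0$. First I would apply Lemma~\ref{lem:UnkownFixedDLem} with error $\lambda/2$, i.e.\ set $q \coloneqq \lceil 2md/\lambda \rceil$, so that the set $\Xi_q$ of $q$-uniform posteriors satisfies $OPT_{\Xi_q} \ge OPT - \lambda/2$. Since $|\Xi_q| \le q^d = O((md/\lambda)^d)$, this set is of size polynomial in the input and in $1/\lambda$ whenever $d$ is fixed, and therefore LP~\ref{eq:lp_signaling} instantiated for any distribution and for $\Xi_q$ --- it has $|\Xi_q|$ variables and $d+1$ constraints --- can be solved exactly in polynomial time.

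Second, I would control the error introduced by sampling. Note that $\textsc{Rev}(V,\xi) \le \sum_{j=1}^m j(\lambda_j - \lambda_{j+1}) = \sum_{j=1}^m \lambda_j \le m$ for every $V$ and $\xi$, hence $OPT_{\Xi_q} \le m$. I would then invoke Lemma~\ref{lem:rvSamples} with $\tau \coloneqq \lambda/4$ and $\rho \coloneqq \frac{\lambda}{4 m |\Xi_q|}$, which yields $OPT_{\Xi_q,s} \ge (1 - \rho|\Xi_q|)OPT_{\Xi_q} - \tau \ge OPT_{\Xi_q} - \rho|\Xi_q| m - \lambda/4 \ge OPT_{\Xi_q} - \lambda/2 \ge OPT - \lambda$, provided the number of samples is $s = \lceil \frac{2(\lambda_1 m)^2}{\tau^2}\log\frac{2}{\rho}\rceil$. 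Crucially, $\log(1/\rho) = O(\log(m|\Xi_q|/\lambda)) = O(d\log(md/\lambda))$, so $s$ is polynomial in the input size and in $1/\lambda$ for fixed $d$.

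The algorithm is then: draw $s$ i.i.d.\ samples $V_1,\dots,V_s \sim \V$, form the empirical distribution $\V^s$, and solve LP~\ref{eq:lp_signaling} for the pair $(\V^s, \Xi_q)$, where each objective coefficient $\textsc{Rev}(\V^s,\xi) = \frac1s \sum_{t=1}^s \textsc{Rev}(V_t,\xi)$ is computed in polynomial time. By the definition of $OPT_{\Xi_q,s}$, the returned scheme $\gamma_{\V^s}$ has, in expectation over the samples, true expected revenue $\sum_{\xi \in \Xi_q} \gamma_{\V^s}(\xi)\textsc{Rev}(\V,\xi) \ge OPT_{\Xi_q,s} \ge OPT - \lambda$, and the whole procedure runs in time polynomial in the input and in $1/\lambda$ --- which is exactly an (additive) FPTAS. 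If one insists on a high-probability guarantee rather than an in-expectation one, it suffices to repeat the procedure and output the scheme whose revenue, estimated on a fresh batch of samples via a Hoeffding bound, is largest.

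The only genuinely delicate point, and the one I expect to need the most care, is the second phase: the multiplicative slack in Lemma~\ref{lem:rvSamples} is $(1 - \rho|\Xi_q|)$, and $|\Xi_q|$ itself grows polynomially in $1/\lambda$, so $\rho$ must be taken small enough to make $\rho|\Xi_q|\,OPT_{\Xi_q}$ negligible while one simultaneously checks that the induced sample size $s$, which scales with $\log(1/\rho)$, stays polynomial. Everything else is routine bookkeeping: with $d$ fixed, $|\Xi_q|$ is polynomial, $s$ is polynomial, and the LP is of polynomial size.
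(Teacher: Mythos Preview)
Your proposal is correct and follows essentially the same approach as the paper: both proofs combine Lemma~\ref{lem:UnkownFixedDLem} (discretize to $q$-uniform posteriors) with Lemma~\ref{lem:rvSamples} (replace $\V$ by the empirical distribution over $s$ samples), then solve LP~\ref{eq:lp_signaling} on $(\V^s,\Xi_q)$; the only differences are the order in which the two lemmas are invoked and the specific way the error budget is split, which are immaterial. Your write-up is arguably a bit more explicit than the paper's in verifying that $s$ and $|\Xi_q|$ are polynomial for fixed $d$.
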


Next, we switch the attention to the case in which the number of slots $m$ is fixed.
We provide an (additive) PTAS that works as the FPTAS in Theorem~\ref{lem:UnkownFixedDThm}, but whose approximation guarantees follow from Lemma~\ref{lem:rvSamples} and point (i) in Lemma~\ref{lem:UnkownFixedKLem} (rather than Lemma~\ref{lem:UnkownFixedDLem}).
Thus, the only difference with respect to the previous case is that the algorithm works on the set $\Xi_q$ of $q$-uniform posteriors for $q $ defined as in Lemma~\ref{lem:UnkownFixedKLem}.
As a result, since $|\Xi_q| = O(d^q)$ and $q$ depends on a parameter $\eta > 0$ that is related to the quality of the obtained approximation, the algorithm is only a PTAS rather than an FPTAS.
Formally, we can prove the following:
\begin{restatable}{theorem}{UnkownFixedKThm}\label{lem:UnkownFixedKThm}
	In the RV setting, if the number of slots $m$ is fixed, then the problem of computing an optimal signaling scheme admits and (additive) PTAS.
	%
\end{restatable}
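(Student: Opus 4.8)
The plan is to mimic exactly the two-step procedure used for the fixed-$d$ case (Theorem~\ref{lem:UnkownFixedDThm}), but to drive the approximation through point (i) of Lemma~\ref{lem:UnkownFixedKLem} rather than Lemma~\ref{lem:UnkownFixedDLem}. Fix a target additive error $\lambda > 0$. First I would choose $\eta > 0$ small enough that $2\eta m \le \lambda/3$ (so $\eta = \Theta(\lambda/m)$), and set $q := \lceil \frac{1}{2\eta^2}\log\frac{m+1}{\eta}\rceil$ as in Lemma~\ref{lem:UnkownFixedKLem}; this immediately gives $OPT_{\Xi_q} \ge OPT - 2\eta m \ge OPT - \lambda/3$ by point (i). Second, I would invoke the sampling oracle to obtain $s$ i.i.d.\ matrices $V_1,\dots,V_s$ and form the empirical distribution $\V^s$, then solve LP~\ref{eq:lp_signaling} on the set $\Xi_q$ with $\textsc{Rev}(\V^s,\cdot)$ in the objective, obtaining an optimal solution $\gamma_{\V^s}$.

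The accuracy of this output is controlled by Lemma~\ref{lem:rvSamples}. Applying it with parameters $\rho,\tau > 0$ chosen so that $\tau \le \lambda/3$ and $\rho\,|\Xi_q| \cdot OPT_{\Xi_q} \le \lambda/3$ (recall $OPT \le \lambda_1 m \le m$, so a crude bound $\rho \le \frac{\lambda}{3m|\Xi_q|}$ suffices), Lemma~\ref{lem:rvSamples} yields $OPT_{\Xi_q,s} \ge (1-\rho|\Xi_q|)OPT_{\Xi_q} - \tau \ge OPT_{\Xi_q} - 2\lambda/3$, and chaining with the previous display gives that the expected revenue of the computed scheme is at least $OPT - \lambda$. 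The required sample size from Lemma~\ref{lem:rvSamples} is $s = \lceil \frac{2(\lambda_1 m)^2}{\tau^2}\log\frac{2}{\rho}\rceil$; substituting the chosen $\tau,\rho$ shows $s$ is polynomial in the input size and in $1/\lambda$ (the $\log(1/\rho)$ term contributes only a $\log|\Xi_q|$ factor, which I will bound below).

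For the running time I would note $|\Xi_q| \le \min\{d^q, q^d\} \le d^q$, and that the LP on $\Xi_q$ has $O(|\Xi_q|)$ variables and $O(d)$ constraints, so it is solvable in time polynomial in $|\Xi_q|$ and the bit-length of the data. The key point is that $q = \Theta\!\big(\frac{1}{\eta^2}\log\frac{m+1}{\eta}\big) = \Theta\!\big(\frac{m^2}{\lambda^2}\log\frac{m}{\lambda}\big)$ does not depend on $d$ only through the exponent; since $m$ is fixed, $q$ is $\mathrm{poly}(1/\lambda)$ and hence $|\Xi_q| = d^{\mathrm{poly}(1/\lambda)}$, which is polynomial in the input for every \emph{fixed} $\lambda$. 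This is precisely why the result is a PTAS and not an FPTAS: unlike the fixed-$d$ regime where $q = \lceil md/\lambda\rceil$ keeps $d^q$ polynomial in $1/\lambda$, here $q$ blows up as $\lambda \to 0$, so $d^q$ is not polynomial in $1/\lambda$. The main thing to be careful about — the only genuinely non-routine step — is the bookkeeping that ties the three parameters $\eta,\rho,\tau$ together so that all three error contributions sum to at most $\lambda$ while simultaneously keeping $s$ and $|\Xi_q|$ under control; everything else is a direct transcription of the fixed-$d$ argument with Lemma~\ref{lem:UnkownFixedKLem}(i) substituted for Lemma~\ref{lem:UnkownFixedDLem}.
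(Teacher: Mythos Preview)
Your proposal is correct and follows essentially the same approach as the paper: split the error budget into three parts, use Lemma~\ref{lem:UnkownFixedKLem}(i) to bound $OPT_{\Xi_q}$, then Lemma~\ref{lem:rvSamples} to control the sampling error, and finally observe that $|\Xi_q|\le d^q$ with $q$ depending only on $m$ and $\lambda$ yields a PTAS when $m$ is fixed. One small slip in your closing aside: in the fixed-$d$ regime the relevant bound is $|\Xi_q|\le q^d$, not $d^q$, which is what makes that case an FPTAS; this does not affect the present proof.
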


\subsection{Valuations Bounded Away From Zero}\label{subsec:unknown_non_zero}

We conclude the section by studying the case in which the bidders' valuations are bounded away from zero.
This case is dealt with an algorithm identical to the one in Theorem~\ref{lem:UnkownFixedKThm}, but carrying on the approximation analysis by using Lemma~\ref{lem:rvSamples} and point (ii) in Lemma~\ref{lem:UnkownFixedKLem} (rater than point (i)).
Thus, since the value of $q$ in Lemma~\ref{lem:UnkownFixedKLem} is related to the quality of the approximation thorough a parameter $\eta > 0$ and also depends logarithmically on the number of slots $m$, we obtain:
\begin{restatable}{theorem}{UnkownFixedKHigh}\label{lem:UnkownFixedKHigh}
In the RV setting, if $v_i(\theta) \ge \delta$ for all $i \in \N$ and $\theta \in \Theta$ for some $\delta > 0$, then the problem of computing an optimal signaling scheme admits a (multiplicative) QPTAS.
\end{restatable}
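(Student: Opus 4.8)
The plan is to combine the three preliminary lemmas from the RV section exactly as the surrounding text foreshadows, tuning the parameters so that a \emph{multiplicative} guarantee falls out. First I would fix a target relative error $\zeta \in (0,1)$ and set $\eta \coloneqq \delta \zeta / 4$ (any constant fraction of $\delta\zeta$ works), and let $q \coloneqq \left\lceil \frac{1}{2\eta^2}\log\frac{m+1}{\eta}\right\rceil$ as in Lemma~\ref{lem:UnkownFixedKLem}. Point (ii) of that lemma then gives $OPT_{\Xi_q} \geq \left(1-\frac{\eta}{\delta}\right)^2 OPT \geq (1-\tfrac{\zeta}{2}) OPT$ for $\zeta$ small. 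The key observation for the running time is that $q$ depends only \emph{logarithmically} on $m$ and polynomially on $1/\eta = \Theta(1/(\delta\zeta))$, so $q = O\!\left(\frac{1}{\delta^2\zeta^2}\log\frac{m}{\delta\zeta}\right)$, and hence $|\Xi_q| \le d^{q}$, which is $d$ raised to a poly-logarithmic-in-input, poly-in-$1/\zeta$ power — i.e.\ quasi-polynomial. This is precisely the source of the ``QP'' in QPTAS.

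Next I would handle the sampling error via Lemma~\ref{lem:rvSamples}. Apply it with $\Xi = \Xi_q$ and choose $\rho, \tau$ so that both the multiplicative factor $1-\rho|\Xi_q|$ and the additive slack $\tau$ are absorbed into the overall relative error: take $\rho \coloneqq \frac{\zeta}{4|\Xi_q|}$ (so $\rho|\Xi_q| = \zeta/4$) and $\tau \coloneqq \frac{\zeta}{4}\cdot OPT_{\Xi_q}$ — or, to avoid circular dependence on the unknown $OPT$, bound $OPT_{\Xi_q}$ from below by a crude efficiently-computable quantity (e.g.\ the revenue of the no-signaling scheme, or simply handle the degenerate case $OPT$ tiny separately) and use that as the scale for $\tau$. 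With these choices, $s = \left\lceil \frac{2(\lambda_1 m)^2}{\tau^2}\log\frac{2}{\rho}\right\rceil$ is polynomial in the input, in $1/\zeta$, and in $\log|\Xi_q|$, hence quasi-polynomial overall. The algorithm then: (i) draws $s$ samples to form $\V^s$; (ii) enumerates $\Xi_q$ and solves LP~\ref{eq:lp_signaling} for $(\V^s, \Xi_q)$, whose size is $O(|\Xi_q| + d)$ and therefore quasi-polynomial. Chaining the two inequalities gives that the expected revenue of the returned scheme under the true $\V$ is at least $(1-\zeta/4)OPT_{\Xi_q} - \tau \geq (1-\zeta/4)(1-\zeta/2)OPT - \tau \geq (1-\zeta)OPT$, a multiplicative guarantee.

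The step I expect to be the main obstacle — and the one I would be most careful about — is making the multiplicative guarantee genuinely multiplicative end-to-end. Lemma~\ref{lem:rvSamples} as stated carries an \emph{additive} $\tau$ term, so to convert everything to a clean $(1-\zeta)OPT$ bound I need a handle on the scale of $OPT$ (or $OPT_{\Xi_q}$). The honest fix is: either (a) observe that one can compute a lower bound $L \le OPT$ in polynomial time and set $\tau$ proportional to $L$ — but this requires checking $L$ is within a constant factor of $OPT$, which is not obvious; or (b) run the whole procedure with $\tau$ set to a geometric grid of guesses for $OPT$, returning the best scheme found, which multiplies the running time only by a logarithmic factor and preserves the QPTAS status. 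Option (b) is the safe route and is the one I would write up. A secondary technical point is that point (ii) of Lemma~\ref{lem:UnkownFixedKLem} already presupposes $\eta < \delta$, so one must note that the QPTAS is meaningful only for $\zeta$ (equivalently $\eta$) small relative to $\delta$ — but that is automatic since we are proving an asymptotic approximation-scheme statement. Once these parameter-juggling issues are pinned down, the theorem follows by assembling Lemmas~\ref{lem:rvSamples} and~\ref{lem:UnkownFixedKLem}(ii) with no new ideas, exactly as the text promises.
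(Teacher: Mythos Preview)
Your proposal is correct and follows the paper's route exactly: combine Lemma~\ref{lem:UnkownFixedKLem}(ii) with Lemma~\ref{lem:rvSamples} over the $q$-uniform posteriors $\Xi_q$, with $\eta$ a constant multiple of $\delta\zeta$ and $\rho$ inversely proportional to $|\Xi_q|$. The one place where you work harder than necessary is your ``main obstacle'': the paper dispenses with the grid search by noting that the hypothesis $v_i(\theta)\ge\delta$ already gives the deterministic lower bound $\textsc{Rev}(\V,\xi)\ge \delta\sum_{j}\lambda_j\ge \delta\lambda_1$ for \emph{every} posterior $\xi$, so one simply sets $\tau \propto \delta\lambda_1$; then the $\lambda_1^2$ in the numerator of $s$ from Lemma~\ref{lem:rvSamples} cancels against $\tau^2$, leaving $s$ polynomial in $m$, $1/\delta$, $1/\zeta$, and $\log|\Xi_q|$ --- precisely your option (a) with the trivial choice $L=\delta\lambda_1$, which needs no constant-factor comparison to $OPT$.
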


The following theorem shows that the result is tight.

\begin{restatable}{theorem}{UnkownFixedKHighTight}\label{lem:UnkownFixedKHighTight}
	Assuming the ETH, there exists a constant $\omega > 0$ such that finding a signaling scheme that provides an expected revenue at least of $(1-\omega) OPT$ requires $I^{\tilde \Omega (\log I )}$ time, where $I$ is the size of the problem instance. This holds even when $v_i(\theta) > \frac{1}{3}$ for all $i \in \N$ and $\theta \in \Theta$.\footnote{The $\tilde \Omega$ notation hides poly-logarithmic factors.}
	%
\end{restatable}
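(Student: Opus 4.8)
The plan is to derive the lower bound by a reduction from \emph{gap-$3$-SAT} amplified through the \emph{birthday-repetition} technique, in the spirit of the quasi-polynomial lower bounds known for Bayesian persuasion and signaling in games~\cite{bhaskar2016hardness,castiglioni2019persuading}. Start from a $3$-SAT formula on $n$ variables with $O(n)$ clauses; by the PCP theorem one may assume a constant $\epsilon_0>0$ such that distinguishing satisfiable instances from instances in which every assignment violates at least an $\epsilon_0$-fraction of the clauses is ETH-hard, i.e.\ requires time $2^{\Omega(n)}$. Birthday repetition turns this into a two-prover \emph{free} game $G$ whose questions are drawn from a product distribution: prover $1$ receives a uniformly random set $S$ of $k=\tilde\Theta(\sqrt n)$ variables and answers with an assignment to $S$, prover $2$ receives a uniformly random set $T$ of $k$ clauses and answers with an assignment to $\mathrm{vars}(T)$, and the verifier accepts iff the two partial assignments agree on $S\cap\mathrm{vars}(T)$ (nonempty with constant probability by the birthday paradox) and prover $2$'s assignment satisfies all clauses of $T$. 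The standard analysis gives $\mathrm{val}(G)=1$ in the \textsc{yes} case and $\mathrm{val}(G)\le 1-\epsilon_1$ in the \textsc{no} case for a constant $\epsilon_1>0$, while the description size is $N=2^{\tilde\Theta(\sqrt n)}$, so $n=\tilde\Theta(\log^2 N)$ and beating an $N^{\tilde\Omega(\log N)}$-time algorithm for (the gap version of) $G$ refutes ETH. Targeting a \emph{free} game is essential here, because the prior $\mu$ over states and the distribution $\V$ of valuations are independent, so only product-distribution question spaces can be realized by a Bayesian ad auction in the RV setting.

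I would then encode $G$ as such an auction as follows. Let the states of nature be the prover-$1$ questions, so $d=\binom{n}{k}$ and $\mu$ is uniform over $\Theta$. A draw from $\V$ first picks a prover-$2$ question $T$ uniformly and then ``activates'' the bidders indexed by pairs $(T,b)$, where $b$ ranges over prover-$2$ answers to $T$: an activated bidder's valuation is $\tfrac23$ on exactly those states $S$ for which the answer $b$, jointly with a matching prover-$1$ answer readable off the induced posterior, passes the verifier, and $\tfrac13+\epsilon$ otherwise; every non-activated bidder has valuation $\tfrac13+\epsilon$ on all states. This keeps all valuations strictly above $\tfrac13$, as required, while the $\tfrac23$ versus $\tfrac13+\epsilon$ contrast carries the ``signal''. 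A signaling scheme---a consistent distribution over posteriors, equivalently a randomized map from $S$ to signals---plays the role of prover $1$'s (mixed) strategy; under a posterior $\xi$ the VCG allocation assigns the top slots to the activated bidders maximizing $\xi^\top v_{(T,b)}$, which realizes prover $2$'s best response to $\xi$; and, using a polynomial (hence super-constant) number of slots together with duplicated copies of each bidder and a suitable choice of click-through rates, one tunes the construction so that $\textsc{Rev}(\V,\cdot)$ becomes an affine function $A+B\cdot(\text{acceptance probability of the induced strategies})$ with $A,B=\Theta(1)$ fractions of the total. Note that both $d$ and $m$ grow with $N$, consistently with the positive results (Theorems~\ref{thm:knownfixD} and~\ref{lem:UnkownFixedKThm}): neither parameter can be fixed in the hard instances.

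Completeness is then immediate: a satisfying assignment yields optimal prover strategies, hence a signaling scheme with revenue $A+B=:c$. For soundness I argue the contrapositive: from any signaling scheme with revenue at least $(1-\omega)c$ I extract prover strategies by decoding, from the posterior sent in each state $S$ and from the identity of the best-responding activated bidder, a prover-$1$ answer and a prover-$2$ answer; an averaging / correlated-agreement argument over the birthday-intersection event shows these make the verifier accept with probability $>1-\epsilon_1$ once $\omega$ is a small enough constant. Since revenue is affine in acceptance probability with slope $B$ and intercept $A$, the \textsc{no} case caps revenue at $A+B(1-\epsilon_1)=c-B\epsilon_1\le(1-\omega)c$ for $\omega:=B\epsilon_1/(A+B)=\Theta(\epsilon_1)$: the floor that forces valuations above $\tfrac13$ shrinks $\omega$ but keeps it a positive constant. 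Finally $I=\mathrm{poly}(N)$, so an algorithm computing a $(1-\omega)$-approximation in time $I^{o(\log I)}$ would, through the size accounting $I=2^{\tilde\Theta(\sqrt n)}$, solve gap-$3$-SAT in time $2^{o(n)}$, contradicting ETH; the poly-logarithmic slack is exactly what the $\tilde\Omega$ in the statement absorbs.

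I expect the soundness step to be the main obstacle. Unlike the honest prover strategies, a signaling scheme may use arbitrarily many signals and arbitrarily shaped posteriors, so the activation pattern and the payments must be designed so that no posterior can ``over-certify'': a best-responding bidder must not be able to appear to satisfy more constraints than an honest partial assignment, so that the reduction is \emph{robust} and loses only a constant---rather than a $1-o(1)$---factor. The delicate point inside this is faithfully reflecting the verifier's predicate, which depends on the \emph{actual} prover-$1$ answer and not merely on the induced distribution over $S$, in the induced expected valuations; this is what pins down the precise valuation gadget. A secondary and routine issue is the sample-access nature of the RV model: since $\V$ has support of size $\mathrm{poly}(I)$ and is efficiently sampleable, a polynomial number of oracle calls recovers $\V$ exactly with high probability, so the reduction is unaffected by the restriction to black-box samples.
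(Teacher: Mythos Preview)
Your proposal has a genuine gap at precisely the point you flag as ``delicate.'' The valuations $v_{(T,b)}(\theta)$ must be functions of the bidder identity and the state $\theta=S$ alone; they cannot depend on ``a matching prover-$1$ answer readable off the induced posterior,'' since the posterior is determined by the signaling scheme, not by the state. As written, the valuation gadget is ill-defined. Natural repairs fail: setting $v_{(T,b)}(S)=\tfrac23$ whenever \emph{some} prover-$1$ answer $a$ to $S$ makes the verifier accept $(S,T,a,b)$ breaks soundness, because a single posterior can then ``certify'' using inconsistent prover-$1$ answers across different $S$---exactly the cheating the two-prover protocol is meant to rule out. Pushing the prover-$1$ answer into the state space instead (so $\Theta$ consists of (question, answer) pairs) removes the sender's control over the answer, since the signaling scheme only redistributes prior mass and cannot choose which answer is realized. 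You have correctly identified the obstacle but not overcome it; the affine revenue bookkeeping and size accounting that follow are fine, but they rest on a construction that is not yet specified.

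The paper sidesteps all of this by reducing not from gap-$3$-SAT directly but from \emph{public signaling in $k$-voting}, for which an $n^{\tilde\Omega(\log n)}$ ETH lower bound---with the birthday-repetition analysis already absorbed---is available from prior work~\cite{castiglioni2020public}. The reduction is then a one-line affine map: take $v_i(\theta)=\tfrac{u^i_\theta}{3}+\tfrac23\in(\tfrac13,1]$ and $m=k-1$ slots with $\lambda_j=1$, so that $\textsc{Rev}(V,\xi)$ equals $(k-1)$ times the $k$-th largest expected valuation. Completeness and soundness become two-line threshold comparisons ($\ge(k-1)\tfrac23$ versus $<(k-1)\tfrac{2-\epsilon}{3}$), yielding $\omega=\epsilon/2$. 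The hard instances are in fact deterministic (KV), which also disposes of the sample-access concern you raise. The moral: the hard combinatorial step---forcing a signaling scheme to encode an honest prover-$1$ strategy robustly---was already carried out in the voting result, and there is no need to redo it inside the auction model.
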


\clearpage
\bibliographystyle{named}
\bibliography{bibliography}

\clearpage
\appendix
\onecolumn
\begin{center}
	\LARGE{\textbf{Supplementary Material }}
\end{center}

\hardnessOne*
\begin{proof}
We reduce from vertex cover in cubic graphs.
Formally, it is NP-Hard to approximate the minimum size vertex cover in cubic graph with an approximation $(1+\epsilon)$, for a given constant $\epsilon>0$~\citeauthor{APXAlimonti}~[\citeyear{APXAlimonti}].
Let $\eta= \epsilon/7$ and $\delta=\eta/4$.
We show that for $\delta$, an $1-\delta$ approximation to the signaling problem can be used to provide a $(1+\epsilon)$ approximation to vertex cover in polynomial time.

Given an instance of vertex cover $(L,E)$ with nodes $\rho=|L|$ and edges $E$.
For each $z \in \{1,\dots,\rho\}$, we build an instance as follows.
There are $m_z=z+\rho|E|-1$ slots and $\lambda_j=1$ for each $j \in \{1,\dots,m\}$.
The set of states is $\Theta=\{\theta_l\}_{l \in L}$ and the set of receivers is $\N=\{r_{e,i}\}_{e \in E, i \in \{1,\dots,\rho\}} \cup \{r_l\}_{l \in L} \cup \{r_i\}_{i \in \{1,\dots,m+1\}}$.
The valuation of a receiver $r_{e,i}$, $e \in E$ and $i \in \{1,\dots,\rho\}$, is $v_{r_{e,i}}(\theta_v)=1$ if $v \in e$, \emph{i.e.}, $e$ is an edge that includes $v$, and $0$ otherwise.
The valuation of a receiver $r_{l}$, $l \in L$, is $v_{r_{v}}(\theta_l)=1$ and $v_{r_{l}}(\theta_{l'})=0$ for each $l' \neq l$.
Moreover the valuation of a receiver $\{r_i\}$, $i \in \{1,\dots,m+1\}$ is $v_{r_{i}}(\theta)=(1-\eta)/z$ for each state $\theta \in \Theta$.
Finally, the prior is uniform over all the states.

Let $L^*$ be the minimum vertex cover and $z^*$ be its size.
We show how to build a vertex cover of size at most $z^*(1+\epsilon)$ from the solutions to the signaling problems instantiated with $z \in \{1,\dots,\rho\}$.
For all $z\in \{1,\dots,\rho\}$, given a signaling scheme we recover a vertex cover $L(z)$ as follows.
Take the posterior with larger sender's utility and add to the vertex cover $L(z)$  all the vertexes $l \in L$ such that the receiver $r_l$ has valuation at least $1/z (1-\frac{3}{4}\eta)$. Then, for each edge that is not covered, we add to $L(z)$ one arbitrary adjacent vertex.
It is easy to see that the resulting solution $L(z)$ is a vertex cover.
Finally, the algorithm returns the smallest among the vertex covers $L(z)$, $z \in \{1,\dots,\rho\}$. 

We show that the vertex cover $L(z^*)$ has size at most $z^*(1+\epsilon)$, concluding the proof. 
First, we show that the optimal solution of the signaling problem is at least $\frac{m_{z^*}}{z^*}(1-\frac{1}{2}\eta)$. 
Consider the signaling scheme with two signals $s_1$ and $s_2$ with $\phi_{\theta_l}(s_1)=1$ for each $l \in L^*$ and $\phi_{\theta_l}(s_2)=1$ for each $l \notin L^*$.
In the posterior induced by $s_1$ the revenue is at least $\frac{m_{z^*}}{z^*}$ since all the receivers $r_{e,i}$,  have expected at least $1/z^*$, while all the receivers $\{r_l\}_{l \in L^*}$ have utility at least $1/z^*$. Hence, there are at least $z^*+ \rho|E|=m_{z^*}+1$ agents with valuation at least $1/z^*$.
Moreover, in the posterior induced by $s_2$ the revenue is at least $(1-\eta)\frac{m_{z^*}}{z^*}$ since all the receivers $\{r_i\}_{i \in\{1,\dots,m_{z^*+1}\}}$ have expected valuation $(1-\eta)/z^*$.
Since $z^*\ge |E|/3$ and $\rho=\frac{2}{3}|E|$ signal $s_1$ is sent with probability at least $z^*/\rho \ge \frac{1}{2}$ and the solution has value at least $\frac{1}{2}\frac{m_{z^*}}{z^*}+ \frac{1}{2}(1-\eta)\frac{m_{z^*}}{z^*}=\frac{m_{z^*}}{z^*}(1-\frac{1}{2}\eta)$.
Hence, a $1-\delta$ approximation algorithm for the signaling problem must return a signaling scheme with value at least $\frac{m_{z^*}}{z^*}(1-\frac{1}{2}\eta)(1-\delta)\ge \frac{m_{z^*}}{z^*}(1-\frac{3}{4}\eta)$.
Since the expected revenue is of the signaling scheme is at least $\frac{m_{z^*}}{z^*}(1-\frac{3}{4}\eta)$, this signaling scheme sends a signal that induces a posterior $\xi \in \Delta_\Theta$ with revenue at least $\frac{m_{z^*}}{z^*}(1-\frac{3}{4}\eta)$.
This implies that there are at least $z^*$ receivers $r_{l}$ with utility greater or equal to $\frac{1}{z^*}(1-\frac{3}{4}\eta)$ and that the utility of all the receivers $r_{e,i}$ is at least $\frac{1}{z^*}(1-\frac{3}{4}\eta)$.
We show that our algorithm recovers a vertex cover with size at most $z^*(1+7 \eta)= z^*(1+\epsilon)$ from this posterior.
Consider the set of vertexes $L^1$ with utility at least $\frac{1}{z^*}(1-\frac{3}{4}\eta)$. This set has size at most $z^*/(1-\frac{3}{4}\eta)$ since in each state only one receiver $r_l$ has valuation $1$ and all the other receivers $r_l'$, $l'\neq l$, have valuation $0$.
Consider the set $ L^2 =L \setminus L^1$ of vertexes not in this set.
We have that $\sum_{l \in L^2} \xi (\theta_l) \le \frac{3}{4}\eta$ since $\sum_{l \in L^1} \xi (\theta_l)= \sum_{l \in L^1} \xi^\top v_{r_l}  \ge (1-\frac{3}{4}\eta)$.
Let $\bar E$ be the set of edges not covered by $L^1$. Since each vertex has three edges, we have that $\sum_{e \in \bar E} \xi^\top v_{r_{e,i}}\le 3\frac{3}{4}\eta$ for each $i \in \{1,\dots,m_{z^*}+1\}$. Moreover, since for each edge in $\bar E$, $\xi^\top v_{r_{e,i}}\ge (1-\frac{3}{4}\eta)/z^*$, we have that $|\bar E| \le \frac{3\frac{3}{4}\eta} {(1-\frac{3}{4}\eta)/z^*}$.
Then, the vertex cover build by the algorithm for $z=z^*$ includes at most \[ z^*( \frac{1}{1-\frac{3}{4}\eta} + \frac{3\frac{3}{4}\eta}{1-\frac{3}{4}\eta})\le z^*(1+\frac{9}{4} \eta)  (1+\frac{3}{2} \eta)\le z^* (1+7 \eta)=z^*(1+\epsilon).\]

\end{proof}

\lemsupport*
\begin{proof}
To prove the result we show that, given a signaling scheme $\gamma$ that induces two posteriors $\xi,\xi' \in \Xi_{\pi}$ for a $\pi \in \Pi_{m+1}$, we can recover a signaling scheme $\gamma^*$ with at least the same revenue that replaces the two posteriors $\xi$ and $\xi'$ with a convex combination of them.
 Let $\xi, \xi' \in \Xi_{\pi}$ be two elements belonging to the support of an optimal signaling scheme $\gamma$. In order to show the result we introduce a posterior probability $\xi^*$ as follow: $\xi^* = z \xi_1 + (1- z) \xi_2$ with $ z=\gamma(\xi_1)/(\gamma(\xi_1)+\gamma(\xi_2) )$. Since $\Xi_{\pi}$ is a convex polytope each convex combination of a subset of its elements belongs to it. Hence, $\xi^* \in \Xi_{\pi}$. Moreover, we will define a new signaling scheme $\gamma^*$ as follow: $\gamma^*(\xi^*) = \gamma(\xi_1) + \gamma(\xi_2)$ and $\gamma^*(\xi_1) = \gamma^*(\xi_2) = 0$ while $\gamma^*(\xi_i) = \gamma(\xi_i)$ $\forall i \ne 1,2$. To conclude the proof, we observe that the two signaling schemes gain the same revenue. Indeed, by linearity we have: $\gamma(\xi^*)  \textsc{Rev}(V, \xi^*) = ( \gamma(\xi_1)+\gamma(\xi_2))  \textsc{Rev}(V, z \xi_1 + (1-z) \xi_2) = \gamma(\xi_1) \textsc{Rev}(V, \xi_1)+\gamma(\xi_2) \textsc{Rev}(V, \xi_2) $ .
\end{proof}

\lemmaKnownD*
\begin{proof}
Given a tuple $\pi \in \Pi_{m+1}$ and $\xi \in \Xi_{\pi}$, we define $x_{\pi}(\theta) := \gamma(\xi)\hspace{0.2mm} \xi(\theta)$ for each $ \theta \in \Theta$ as the posterior probability multiplied by the probability of state $\theta$ in $\xi$.
Notice that by Lemma \ref{lem:support} there is at most one $\xi \in \Xi_{\pi}$ belonging to the support of an optimal $\gamma$ for each possible tuple $\pi \in \Pi$. Thus, we can represent our optimization problem with the following LP.

\begin{subequations}\label{eqn:LP_kfixed}
	\begin{align}
	\max_{x \in [0,1]^{|\Pi_{m+1}||\Theta|}} & \sum_{\pi=(i_1,\dots,i_{m+1}) \in \Pi_{m+1} } \sum_{\theta \in \Theta}  x_{\pi}(\theta) \sum_{j=1}^{m} j \,  v_{i_{j + 1}}(\theta) (\lambda_{j}-\lambda_{j+1}) 	\\
	s.t. \ 	& \sum_{\pi \in \Pi_{m+1}} x_{\pi}(\theta) = \mu_{\theta} \hspace{7cm} \forall \theta \in \Theta\\
	& \sum_{\theta \in \Theta} x_{\pi}(\theta) [v_{i_j}(\theta)-v_{i_{j+1}}(\theta) ]\ge 0 \hspace{5mm}  \forall \pi=(i_1,\dots,i_{m+1}) \in \Pi_{m+1}, j \in \{1,\dots,m\} .
	\end{align}
\end{subequations}

 Note that LP \ref{eqn:LP_kfixed} is solvable in polynomial time as long as $m$ is fixed.
 To conclude the proof, we show that from a solution of LP \ref{eqn:LP_kfixed} we can always recover a signaling scheme setting $\gamma(\xi_\pi)=\sum_{\theta \in \Theta} x_{\pi}(\theta)$ for each $\pi \in \Pi_{m+1}$ and $\xi_\pi(\theta)=x_\pi(\theta)/ \gamma(\xi_\pi) $ for each $\pi \in \Pi_{m+1}$ and $\theta \in \Theta$ if $\gamma(\xi_\pi) \ne 0$. Moreover, given a signaling scheme $\gamma$ we can compute a solution to LP \ref{eqn:LP_kfixed} using the same relation. Finally, we show that an optimal solution to LP \ref{eqn:LP_kfixed} provides the same value of the relative distribution $\gamma$. In particular, we have 
 \begin{subequations}
 	\begin{align*}
 	\sum_{\pi \in \Pi_{m+1}} \hspace{-0.15cm}\gamma(\xi_\pi) \sum_{\theta \in \Theta} \sum_{j=1}^{m} j \, \xi_\pi(\theta) v_{i_{j + 1}}(\theta) (\lambda_{j}-\lambda_{j+1})  &=\hspace{-0.15cm} \sum_{\pi=(i_1,\dots,i_{m+1}) \in \Pi_{m+1}} \sum_{\theta \in \Theta} \gamma(\xi_\pi) \xi_\pi(\theta) \sum_{j=1}^{m} j \, v_{i_{j + 1}}(\theta) (\lambda_{j}-\lambda_{j+1}) \\
 	&= \hspace{-0.15cm}\sum_{\pi=(i_1,\dots,i_{m+1}) \in \Pi_{m+1}} \sum_{\theta \in \Theta} x_\pi(\theta) \sum_{j=1}^{m} j \, v_{i_{j + 1}}(\theta) (\lambda_{j}-\lambda_{j+1}) 
 	\end{align*}
 	\end{subequations}
\end{proof}

\lemmaKnownfixd*
\begin{proof}
First, we observe that for each $\xi \in \Delta_{\Theta}$ there exists a tuple $\pi \in \Pi_n$ such that $\xi \in \Xi_{\pi}$, this easily follow from the fact that $\bigcup_{\pi \in \Pi_n} \Xi_{\pi} = \Delta_{\Theta}$.
As observed before in such regions the revenue is a linear function.
Thus, it is possible to decompose each posterior $\xi \in \Xi_{\pi}$ by Caratheodory's theorem as a convex combination of the vertexes of $\Xi_{\pi}$ without decreasing the revenue.
Formally, for each $\pi \in \Pi_n$ and each posterior $\xi \in \Xi_\pi$, there exists a distribution $\gamma_{\xi} \in \Delta_{V(\Xi_{\pi})}$ such that
\begin{equation*}
\xi(\theta)= \sum_{\tilde{\xi} \in \Xi^*} \gamma_{\xi}(\tilde{\xi}) \tilde{\xi}(\theta) \hspace{2mm} \forall \theta \in \Theta.
\end{equation*}
We show that such a decomposition does not affect the final revenue. Indeed, by linearity we get the following:
\begin{equation*}
\sum_{\tilde{\xi} \in V(\Xi_\pi) } \gamma_\xi(\tilde{\xi}) \textsc{Rev}(V, \tilde{\xi}) = \textsc{Rev} \Big (V,  \sum_{\tilde{\xi} \in V( \Xi_\pi ) } \gamma_\xi( \tilde{ \xi})  \tilde{\xi} \Big)
= \textsc{Rev}(V, \xi).
\end{equation*}
To conclude the proof, we show that given the optimal distribution $\gamma$, we can recover a distribution $\gamma^* \in \Delta_{\Xi^*}$ with the same revenue. In particular, $\gamma^* \in \Delta_{\Xi^*}$ is such that:
\begin{equation*}
\gamma^*(\tilde{\xi})=\sum_{\xi \in supp(\gamma)} \gamma(\xi) \gamma_\xi(\tilde{\xi}) \hspace{4mm} \forall \tilde{\xi} \in \Xi^*.
\end{equation*}
Since $\gamma$ satisfies the consistency constraints, it easy to see that also $\gamma^* \in \Delta_{\Xi^*}$ satisfies the consistency constraints.
Moreover, the two distribution provide the same revenue. Indeed, we have
\begin{align*}
\sum_{\tilde{\xi} \in \Xi^*}\gamma^*(\tilde{\xi}) \ \textsc{Rev}(V, \tilde{\xi})
& = \sum_{\xi \in supp(\gamma)} \gamma(\xi) \sum_{\tilde{\xi} \in \Xi^* }
\gamma_\xi(\tilde{\xi}) \textsc{Rev} ( V, \tilde{\xi} ) \\
& = \sum_{\xi \in supp(\gamma)} \gamma(\xi) \ \textsc{Rev}(V, \xi),
\end{align*}
and $OPT=OPT_{\Xi^*}$

\end{proof}

\thmknownfixd*
\begin{proof}
We first observe that the vertexes of each region $\Xi_\pi$ are identify by the intersection of $d-1$ linear independent hyperplanes for each $\pi \in \Pi_n$. Moreover, we note that each of these vertexes is identified by a subset of the $O(n^2)$ constraints $\xi^\top v_i \ge \xi^\top v_j$ for each $i\neq j \in \N$ and the $d$ constraint $\xi(\theta)\ge 0$ for each $\theta \in \Theta$. Hence, the total number of vertexes defining the previous discussed regions will be equal to $|\Xi^*| = O((n^2+d)^{d-1})$. Finally, we notice that, as long as d is a fixed parameter, it is possible to find an optimal signaling scheme in polynomial time solving LP \ref{eq:lp_signaling} with set of posteriors $\Xi^*$.
\end{proof}

\uniformVertices*
\begin{proof}
	
	First, we show that if a posterior $\xi \in \Xi_\pi$ satisfies (i) and (ii), then $\xi \in V(\Xi_\pi)$, \emph{i.e.}, it is a vertex of $\Xi_\pi$.
	In particular, $\xi$ satisfies the linear independent equality $ \delta_{\theta_{k_j}} \xi(\theta_{k_{j}}) = \delta_{\theta_{k_{j+1}}} \xi(\theta_{k_{j+1}}) $ for each $j \in [\ell-1]$ \footnote{Given $n \in \mathbb{N}_{>0}$ we denote with $[n]=\{ 1,.., n \}$}
	Moreover, it satisfies $\delta_{\theta_{k_j}} \xi(\theta_{k_{j}})=0$ for each $j \in \{\ell+1,\dots,d\}$ and the simplex equality $\sum_{\theta \in \Theta}\xi(\theta)=1$.
	Hence, $\xi \in \Xi_\pi$ is at the intersection of $d$ linear independent hyperplanes defining $\Xi_\pi$ and it is a vertex of $\Xi_\pi$.
	To conclude the proof, we show that each vertex $\xi \in \Xi_\pi$ satisfies (i) and (ii).
	In particular, we show that given a posterior $\xi \in \Xi_\pi$ such that $\delta_{\theta_{k_{j^*}}} \xi(\theta_{k_{j^*}}) > \delta_{\theta_{k_{{j^*}+1}}} \xi(\theta_{k_{{j^*}+1}})> 0 $ for a $j^* \in [d]$, \emph{i.e.}, it does not satisfies (i) and (ii), the posterior is at the the intersection of at most $d-1$ linear independent hyperplanes.
	Consider the hyperplanes $\delta_{\theta_{k_{j}}} \xi(\theta_{k_{j}}) = \delta_{\theta_{k_{{j}+1}}} \xi(\theta_{k_{{j}+1}}) $ for $j\le j^*-1$.
	Notice that $\xi$ satisfies at most all the $j^*-1$ inequalities of this kind. \footnote{Notice that the equality $\delta_{\theta_{k_{{j}}}} \xi(\theta_{k_{{j}}})=\delta_{\theta_{k_{{j'}}}} \xi(\theta_{k_{{j'}}}) $ with $j>j'$ and$|j -j'|\ge 2$ is linear dependent from the equalities $\delta_{\theta_{k_{{\bar j}}}} \xi(\theta_{k_{{\bar j}}})=\delta_{\theta_{k_{{\bar j+1}}}} \xi(\theta_{k_{{\bar j+1}}}) $ for each $\bar j \in \{ j,\dots,j'-1\}$.}
	Moreover, consider all the $j>j^*$ with $\delta_{\theta_{k_{j^*}}} \xi(\theta_{k_{j^*}})>0$. Let $j^{**}$ be the largest $j$ that satisfies this condition. By a similar argument as above, we can show that there are at most $j^{**}-j^*-1$ linear inequalities $\delta_{\theta_{k_{{j}}}} \xi(\theta_{k_{{j}}})=\delta_{\theta_{k_{{j+1}}}} \xi(\theta_{k_{{j+1}}}) $ with $j^*+1\le j\le j^{**}-1$.
	Finally, for all the $j>j^{**}$, the equality $\delta_{\theta_{k_{{j}}}} \xi(\theta_{k_{{j}}}) =0$ is satisfied.
	Hence, including the simplex constraint there are at most $j^*-1+j^{**}-j^*-1+d-j^{**}+1=d-1$ linear independent equalities, concluding the proof.
	
\end{proof}

\dinprog*

\begin{proof}
	Let $f(v,j)$ be the revenue when $j \in [n]$ bidders have expected valuation $v \in [0,1]$ and all the other bidders have expected valuation $0$.
	Moreover, given a set $E \subseteq \mathbb{R}$ and an $x \in \mathbb{R}$, let $\lfloor x \rfloor_{E}$ be equal to the largest element $e \in E$ such that $e\le x$. Similarly, we define $\lceil x \rceil_{E}$ be equal to the smallest element $e \in E$ such that $e\ge x$.

	\begin{algorithm} 
		\caption{Dynamic programming algorithm in the proof of Lemma~\ref{lem:dp} }\label{alg:cap}
		\begin{algorithmic}[1]
			\REQUIRE $ \epsilon >0 $
			\STATE  $c\gets \lceil1/ \epsilon \rceil$
			\STATE  $E \gets  \{ i /c \}_{i=0}^{c} $
			\STATE $G \gets \cup_{\theta \in \Theta} \{  \delta_\theta i/c \}_{i=0}^{c} $
			\STATE initialize empty matrices $M$ and $\Theta$ with dimension $cd \times d \times c \times n$
			\FOR{$v \in E$}
					\FOR{$w \in E, w \ge \lfloor v/\delta_{\theta_i}\rfloor_E $}
						\STATE $M(v,1,w,|\N_{\theta_1}|)\gets - y_{\theta_1} v/\delta_{\theta_1} $
						\STATE $\Theta(v,1,w,|\N_{\theta_1}|) \gets \{\theta_1\}$
					\ENDFOR
			\FOR{$i \in [d]$,$w \in E$,$j \in [n]$}
			\IF{$M(v,i-1,w,j)\ge  M(v,\theta_{i-1},w-\lceil v/\delta_{\theta_i}\rceil_E,j-|\N_{\theta_i}|))-y_{\theta_i} v/\delta_{\theta_i}$}
			\STATE $M(v,i,w,j) \gets M(v,i-1,w,j)$
			\STATE $\Theta(v,i,w,j) \gets \Theta(v,i-1,w,j)$
			\ELSE
			\STATE $M(v,i,w,j) \gets M(v,i-1,w-\lceil v/\delta_{\theta_i}\rceil_E,j) - y_{\theta_i} v/\delta_{\theta_i}$
			\STATE $\Theta(v,i,w,j) \gets \Theta(v,i-1,w-\lceil v/\delta_{\theta_i}\rceil_E,j) \cup \{\theta_i\}$
			\ENDIF 
			\ENDFOR
			\ENDFOR
			\STATE $(\hat v,\hat j, \hat w) \gets \argmax_{v \in E,j \in [n],w \in E} f(v,j)+M(v,d,w,j)$
			\STATE $\hat \Theta \gets \Theta(\hat v,d,\hat w, \hat j) $ 
			\STATE $w_{real}=\sum_{\theta \in \hat \Theta} \hat v/\delta_\theta$,
			\FOR{$\theta \in \hat \Theta$}
			\STATE $\hat \xi(\theta)=\frac{\hat v}{\delta_\theta w_{real}}$
			\ENDFOR
			\RETURN $\hat \xi$
		\end{algorithmic}
	\end{algorithm}
	
	We show that Algorithm \ref{alg:cap} provides the desired guarantees.
	It is easy to see that the algorithm runs in polynomial time.
	Let $\xi^*$ be the optimal solution to $\max_{\xi \in \Xi^*} \textsc{Rev}(V, \xi) - \sum_{\theta \in \Theta}    y_{\theta} \, \xi(\theta) $.
	Notice that by the definition of $\Xi^*$ there exists a subset of states $\Theta^* \subseteq \Theta$ and a value $v^*$ such that $\xi^*(\theta) \delta_{\theta}=v^*$ for each $\theta \in \Theta^*$ and $\xi^*(\theta)=0 $ for each $\theta \notin \Theta^*$.
	Our first step is to show that this solution $\xi^*$ corresponds to a feasible solution to the algorithm.
	Let take $v=\lfloor v^*\rfloor_G$ and $w=\sum_{\theta \in \Theta^*} \lfloor v/\delta_{\theta}\rfloor_E$.
	Formally, we show that $\Theta^*$ is a feasible subset of states for $\Theta(v,d,w,\sum_{\theta \in \Theta^*} |\N_\theta|)$.
	In particular, it is sufficient to prove that $w = \sum_{\theta \in \Theta^*} \lfloor v/\delta_\theta \rfloor_E \le \sum_{\theta \in \Theta^*} v/\delta_\theta\le \sum_{\theta' \in \Theta^*} v^*/\delta_\theta =1$.
	
	Since this solution is feasible, it provides a lower bound on the value $\max_{v \in G,j \in [n],w \in E}f(v,j)+M(v,d,w,j)$.
	Let $\theta^* \in \Theta^*$ be the state in $\Theta^*$ with smallest $\delta_\theta$.
	First, we provide a bound on $v^*$.
	It holds $v^*\le \delta_{\theta^*}$, otherwise $\xi^*(\theta)=v^*/ \delta_{\theta^*}>1$.
	Moreover, $v^* \ge \delta_{\theta^*}/d$, otherwise $\sum_{\theta \in \Theta^*} v^*/\delta_\theta<1$.
	This implies that for each $\theta \in \Theta^*$,  $v/\delta_\theta=\lfloor v^*\rfloor_G / \delta_{\theta} \ge (v^*-d\epsilon \delta_{\theta^*}) /\delta_\theta \ge v^* /\delta_\theta -d\epsilon $ and $v \ge v^*-\epsilon d$.
	Now, we can provide our lower bound on $\max_{v \in G,j \in [n],w \in E}f(v,j)+M(v,d,w,j)$.
	In particular, the solution that takes states $\Theta^*$, $v=\lfloor v^*\rfloor_G$, and $w=\sum_{\theta \in \Theta^*} \lfloor v/\delta_{\theta}\rfloor_E$ has value at least and has value at least 
	\begin{subequations}
		\begin{align*}
			f(v,\sum_{\theta \in \Theta^*} |\N_\theta|) -\sum_{\theta \in \Theta^*} \frac{y_\theta v}{\delta_\theta  }&\ge f(v^*,\sum_{\theta \in \Theta^*} |\N_\theta|) - \epsilon dm-\sum_{\theta \in \Theta^*} \frac{y_\theta v}{\delta_\theta }\\
			&\ge f(v^*,\sum_{\theta \in \Theta^*} |\N_\theta|) -\sum_{\theta \in \Theta^*} y_\theta  \xi(\theta) - \epsilon d m - d^2 \beta \epsilon\\
			& = 	\max_{\xi \in \Xi^*} \,\, \textsc{Rev}(V, \xi) - \sum_{\theta \in \Theta}    y_{\theta} \, \xi(\theta)- \epsilon d m - d^2 \beta \epsilon ,
		\end{align*}
	\end{subequations}
where the first inequality comes from Lipschitz continuity of $f(\cdot,x)$, \emph{i.e.}, $f(v^*,x)-f(v,x)\le m |v^*-v|$ for each $x \in \mathbb{N}$ and $v=\lfloor v^*\rfloor_G \ge v^* -\epsilon d$, while the second inequality comes from $-\frac{y_\theta v}{\delta_\theta  } \ge -y_\theta (v^*/\delta_\theta -d \epsilon )\ge -y_\theta v^*/\delta_\theta -d \beta \epsilon $ for each $\theta \in \Theta^*$.
	
	To conclude the proof, we show that from a solution $(\hat v,\hat j,\hat w)$ and $\hat \Theta$ the algorithm find a posterior $\hat \xi$ with value at least $f(\hat v,\hat j)+M(\hat v,d,\hat w,\hat j)- \epsilon d m-2\beta d \epsilon$.
	First, we bound the value of $w_{real}$. In particular, $ w_{real} \le 1+ \epsilon d$ since for each state $\theta$, $\hat v/\delta_\theta-\lfloor \hat v/\delta_\theta\rfloor_E \le \epsilon$ and $\hat w\le 1$.
	Hence, in the posterior $\hat \xi$ all the bidders have valuation at least $\hat v/(1+\epsilon d) \ge \hat v- \epsilon d$ and $\textsc{Rev}(V, \hat \xi)=f(\hat v-\epsilon d, \sum_{\theta \in \hat \Theta}|\N(\theta))\ge f(\hat v, \sum_{\theta \in \hat \Theta}|\N(\theta)|) -\epsilon d m$ by the Lipschitz continuity of $f(\cdot,x)$.
	Now, we consider the component $M(\hat v,d,\hat w,\hat j)$. In particular, we show that $\sum_{\theta \in \hat \Theta} -\hat \xi(\theta) y_\theta \ge M(\hat v,d,\hat w,\hat j)-2\beta d \epsilon$.
	Since $y_\theta\le 0$  for each $\theta$, it holds
	\[\sum_{\theta \in \hat \Theta} -\hat \xi(\theta) y_\theta = \sum_{\theta \in \hat \Theta} - \frac{\hat v}{\delta_\theta w_{real}} y_\theta = M(\hat v,d,\hat w,\hat j)/w_{real} \ge  M(\hat v,d,\hat w,\hat j)/(1+d\epsilon) \ge M(\hat v,d,\hat w,\hat j) + 2\beta d \epsilon,  \]
	where the last inequality comes from $M(\hat v,d,\hat w,\hat j) \le \beta w_{real}\le \beta (1+d\epsilon)$ and $ 1/(1+d\epsilon) \ge 1-d\epsilon $ .
	
	To conclude, the value of the solution $\hat \xi$ is an additive $(\epsilon d m + d^2 \beta \epsilon+\epsilon d m+2\beta d \epsilon)$-approximation.
	For $\epsilon$ small enough, we obtain the desired approximation.
\end{proof}

\fptasSingleMinded*
\begin{proof}
Our FPTAS is described in Algorithm~\ref{alg:bisection}.

\begin{algorithm}[H]\caption{FPTAS in the proof of Theorem \ref{thm:fptasSingleMinded}}
	\textbf{Input:} parameter of the relaxed LP $\beta$, approximation factor of the approximation oracle $\lambda$, error $\eta$.
	\begin{algorithmic}[1]
		\STATE \textbf{Initialization}: $\rho_1\gets0$, $\rho_2\gets1$,  $H \gets\emptyset$,$H^* \gets \emptyset$.
		\WHILE{ $\rho_2-\rho_1>\eta$}
		\STATE $\rho_3\gets (\rho_1+\rho_2)/2$
		\STATE $H \gets\{\text{\normalfont posteriors relative to the violated constraints returned by the ellipsoid method on } \circled{F} \newline
		 \text{\normalfont with objective } \rho_3 \,\text{\normalfont and approximation error}\, \delta \}$
		\IF  {unfeasible}
		\STATE $\rho_1\gets \rho_3$
		\STATE $H^*\gets H$
		\ELSE{
			\STATE $\rho_2 \gets \rho_3$}
		\ENDIF
		\STATE $(\gamma,z) \gets$ solution to LP \ref{eq:lp_signaling_reduced} with only posteriors in $H^*$
		\STATE \textbf{return} the solution $\bar \gamma$ corresponding to the solution of the relaxed problem $(\gamma,z)$
		\ENDWHILE
	\end{algorithmic}	\label{alg:bisection}
\end{algorithm}

We start providing the following relaxation to LP \ref{eq:lp_signaling} for a value $\beta \in \mathbb{R}_+$ defined in the following.

\begin{subequations}\label{eq:lp_signaling_relax}
	\begin{align}
	\max_{\gamma \in \Delta_{\Xi^*},z\le 0} & \,\, \sum_{\xi \in \Xi} \gamma(\xi) \, \textsc{Rev}(\V,\xi) +   \beta z \quad \text{s.t.} \\
	& \sum_{\xi \in \Xi^*} \gamma(\xi) \, \xi(\theta)- z  \ge \mu_\theta & \forall \theta \in \Theta. 
	\end{align}
\end{subequations}

Given a solution to $\gamma$, $z$ to LP \ref{eq:lp_signaling_relax}, we can find an approximate solution to LP \ref{eq:lp_signaling} as follows.
First, notice that by the optimality of $\gamma$, $z$, we have $\sum_{\xi \in \Xi} \gamma(\xi) \, \textsc{Rev}(\V,\xi) +   \beta z  \ge 0$ and $z\ge - m/\beta$.
Let $\bar \mu=\sum_{\xi \in \Xi^*} \gamma(\xi) \, \xi$ be the mean of $\gamma$.
We have that $ |\bar \mu_\theta- \mu_\theta| \le dm/\beta$ for each $\theta \in \Theta$.
Consider a distribution $\bar \gamma$ such that 
\begin{align}\label{eq:modification}
\bar \gamma(\xi)=\gamma(\xi)(1-dm/\beta)+ \sum_\theta \mathbb{I}_{\xi(\theta)=1} [\mu_{\theta}-\sum_\xi\gamma(\xi) \xi(\theta) (1-dm/\beta)],
\end{align} 
where $\mathbb{I}_{\xi(\theta)=1}=1$ iff $\xi(\theta)=1$  and $0$ otherwise.
$\bar \gamma$ is a feasible solution to LP \ref{eq:lp_signaling} since $\sum_{\xi \in \Xi^*} \bar \gamma(\xi) \mu (\xi) = \sum_{\xi \in \Xi^*} \bar \gamma(\xi) \xi (\theta) (1-dm/\beta)+ 
 [\mu_{\theta}-\sum_{\xi \in \Xi^*} \gamma(\xi) \xi(\theta) (1-dm/\beta)]=\mu_\theta$ and $\bar \gamma(\xi)\ge 0$ for each $\xi \in \Xi^*$. Moreover, it has value at least $OPT_{\Xi^*}-dm^2/\beta$ since the distribution $\gamma$ is scaled by a factor $(1-dm/\beta)$ and $OPT_{\Xi^*}\le m$.

Hence, to provide an approximation to LP \ref{eq:lp_signaling}, it is sufficient to provide an approximation to LP \ref{eq:lp_signaling_relax}  for a sufficiently large $\beta$.
Since LP~\ref{eq:lp_signaling_relax} has an exponential number of variables, the algorithm works by applying the ellipsoid method to the following dual problem.

\begin{subequations}\label{eq:dual_lp_signaling_relax}
	\begin{align}
	\min_{y\le 0, t} & \, \sum_{\theta \in \Theta} y_\theta \mu_\theta + t  \quad \text{s.t.} \\
	& \sum_{\theta \in \Theta} y_\theta \, \xi(\theta) + t \geq \textsc{Rev}(V,\xi) & \forall \xi \in \Xi^*\\
	& \sum_\theta y_\theta \ge -\beta, 
	\end{align}
\end{subequations}
where the dual variables are $y \in \mathbb{R}^d_{-} $ and $t \in \mathbb{R}$.
Instead of an exact separation oracle, we use an approximate separation oracle that employs Algorithm~\ref{alg:cap} with a suitably-defined approximation $\lambda>0$.
We use a binary search scheme to find a value $\rho^\star\in [0,1]$ such that the dual problem with objective $\rho^\star$ is unfeasible, while the dual with objective $\rho^\star+\eta$ is \emph{approximately} feasible, for some $\eta \geq 0$ defined in the following.
The algorithm requires $\log(\eta)$ steps and, at each step, it works by determining, for a given value $\rho_3$, whether there exists a feasible solution for the following feasibility problem that we call \circled{F}:
\begin{subequations}
	\begin{align} \label{lp:privatedual1}
	& \sum_{\theta \in \Theta} y_\theta \mu_\theta + t \le \rho_3\\
	&  \sum_{\theta \in \Theta} y_\theta \, \xi(\theta) + t \geq \textsc{Rev}(V,\xi) & \forall \xi \in \Xi^*\\
	&\sum_\theta y_\theta \ge -\beta\\
	&y_\theta \le 0 &\forall \theta \in \Theta.
	\end{align}
\end{subequations}

At each iteration of the bisection algorithm, the feasibility problem \circled{F} is solved via the ellipsoid method.
To do so, we need a separation oracle.
We focus on an approximate separation oracle that returns a violated constraint. 
The oracle is implemented as follows. Given a point $(y,t)$, first it check if all the $y_\theta$ are greater than $0$, $\sum_{\theta} y_\theta\ge -\beta$, and $\sum_{\theta \in \Theta} y_\theta \mu_\theta + t \le \rho_3$. If it is not the case, it returns a violated constraint.
Otherwise, it uses Algorithm \ref{alg:cap} to approximate 
\begin{equation}\label{eq:separation_relax}
\max_{\xi \in \Xi^*} \textsc{Rev}(V, \xi) - \sum_{\theta \in \Theta}    y_{\theta} \, \xi(\theta)   
\end{equation}
Notice, that we guarantee that Algorithm \ref{alg:cap} is called with $y_\theta \ge -\beta$ for each $\theta \in \Theta$.
Let $\xi$ be the returned posterior.
If there returned posterior has value at least $t$, the separation oracle returns the constraint relative to posterior $\xi$. Otherwise, it returns feasible.
The bisection procedure terminates when it determines a value $\rho^\star$ such that on \circled{F} the ellipsoid method returns unfeasible for $\rho^\star$, while returning feasible for $\rho^\star+\eta$.
Then, the algorithm solves a modified primal LP \ref{eq:lp_signaling_reduced} with only the subset of posteriors in $H^*$, where $H^*$ is the set of posteriors relative to the violated constraints returned by the ellipsoid method applied on the unfeasible problem with objective $\rho^*$.
Finally, it computes a solution $\bar \gamma$ from the solution $\gamma$ of LP \eqref{eq:lp_signaling_reduced} using \ref{eq:modification}.

Now, we prove the approximation guarantees of the algorithm.
The algorithm finds a $\rho^*$ such that the problem is unfeasible, \emph{i.e.}, the value of $\rho_1$ when the algorithm terminates, and a value smaller than or equal to $\rho^*+\eta$ such that the ellipsoid method returns feasible, \emph{i.e.}, the value of $\rho_2$ when the algorithm terminates.
In particular, we show that $OPT \le \rho^*+\beta+\delta$, where $OPT$ is the value of LP \ref{eq:lp_signaling_relax}.
Since, the bisection algorithm returns that \circled{F} is feasible with objective $\rho^*+\eta$, it finds a solution $(y,t)$ such that the approximate separation oracle did not find a violated constraint.
We show that $(y,t)$ is a solution to the following LP.

\begin{subequations}\label{lp:privatedual2}
	\begin{align} 
	& \sum_{\theta \in \Theta} y_\theta \mu_\theta + t \le \rho^*+\eta\\
	&  \sum_{\theta \in \Theta} y_\theta \, \xi(\theta) + t \geq \textsc{Rev}(V,\xi) -\lambda& \forall \xi \in \Xi^*\label{eq:modifieddual1}\\
	&\sum_\theta y_\theta \ge -\beta\\
	&y_\theta \le 0 &\forall \theta \in \Theta.
	\end{align}
\end{subequations}

This holds because we have shown that, when the separation oracle returns feasible, it holds $\max_{\xi \in \Xi^*} [\textsc{Rev}(V, \xi) - \sum_{\theta \in \Theta}    y_{\theta} \, \xi(\theta)]\le t+\lambda$ by the approximation guarantees of Algorithm \ref{alg:cap}, implying that all the Constraints \eqref{eq:modifieddual1} are satisfied.
Moreover, when the separation oracle returns feasible all the other constraints are satisfied.
Then, by strong duality the value of the following LP is at most $\rho^*+\eta$.

\begin{subequations}\label{eq:lp_mod}
	\begin{align}
	\max_{\gamma \in \Delta_{\Xi^*},z\le 0} & \,\, \sum_{\xi \in \Xi^*} \gamma(\xi) \, (\textsc{Rev}(V,\xi)-\lambda) + \beta z \quad \text{s.t.} \\
	& \sum_{\xi \in \Xi} \gamma(\xi) \, \xi(\theta)-z\ge \mu_\theta & \forall \theta \in \Theta.
	\end{align}
\end{subequations}

Notice that any solution to LP \ref{eq:lp_signaling} is also a feasible solution to the previous modified problem.
Since in any feasible solution $\sum_{\xi \in \Xi^*} \gamma(\xi)  =1 $ and LP \ref{eq:lp_mod} has value at most $\rho^* +\eta$, then $OPT \le \rho^* +\eta +\lambda$.

Let $H^*$ be the set of posteriors relative to the constraints returned by the ellipsoid method run with objective $\rho^\star$.
Since the ellipsoid method with the approximate separation oracle returns unfeasible, by strong duality LP \ref{eq:lp_signaling_relax} with only the variables $\gamma(\xi)$ relative to constraints in $H^*$ has value at least $\rho^*$. Moreover, since the ellipsoid method guarantees that $H^*$ has polynomial size, the LP can be solved in polynomial time.
Hence, solving the following LP, \emph{i.e.}, the primal LP \ref{eq:dual_lp_signaling_relax} with only the variables $\gamma(\xi)$ in $H^*$, we can find a solution with value at least $\rho^*$.

\begin{subequations} \label{eq:lp_signaling_reduced}
	\begin{align}
	\max_{\gamma \in \Delta_{H^*},z\le 0} & \,\, \sum_{\xi \in H^*} \gamma(\xi) \, \textsc{Rev}(V,\xi) + \beta z \quad \text{s.t.} \\
	& \sum_{\xi \in H^*} \gamma(\xi) \, \xi(\theta)-z \ge \mu_\theta & \forall \theta \in \Theta.
	\end{align}
\end{subequations}

To conclude the proof, notice that the algorithm provides an $dm^2 /\beta + \eta+\lambda$, where the term $dm^2 /\beta$ is due to the relaxation of the primal and  $\eta+\lambda$ to the use of an approximate separation oracle. Since the algorithm runs in time polynomial in $\beta$, $1/\eta$ and $1/\lambda$, we can provide an arbitrary good approximations choosing sufficiently small values of $\eta$ and $\lambda$, and a sufficiently large value for $\beta$.
\end{proof}

\rvFinite*
\begin{proof}
Let $\gamma$ be an optimal distribution over $\Delta_{\Theta}$ satisfying the consistency constraints. We define $\gamma^* \in \Delta_{\Xi}$ as follow:
\begin{equation*}
    \gamma^*(\tilde{\xi})=\sum_{\xi \in supp(\gamma)} \gamma(\xi)  \gamma_{\xi}(\tilde{\xi}) \hspace{4mm} \forall \tilde{\xi} \in \Xi,
\end{equation*}
where $\gamma_{\xi}$ is the distribution that satisfies Definition \ref{def:stability}.
It easy to see that $\gamma^* \in \Delta_{\Xi^*}$ satisfies the consistency constraints. Moreover, since $\textsc{Rev}(\mathcal{V},\tilde{\xi})$ is $(\alpha,\epsilon)$-stable for $\Xi$, we get:
\begin{align*} 
    \sum_{\tilde{\xi} \in \Xi}\gamma^*(\tilde{\xi}) \textsc{Rev}(\mathcal{V},\tilde{\xi}) 
    & =   \sum_{\xi \in supp(\gamma)} \gamma(\xi) \sum_{\tilde{\xi} \in \Xi }
   \gamma_\xi(\tilde{\xi})  \textsc{Rev}(\mathcal{V},\tilde{\xi})  \\
     & \ge  \sum_{\xi \in supp(\gamma)} \gamma(\xi) ( (1-\alpha) \textsc{Rev}(\mathcal{V},\xi) - \varepsilon)\\
     & =   (1-\alpha) \sum_{\xi \in supp(\gamma)}  \gamma(\xi) \textsc{Rev}(\mathcal{V},\xi)  - \varepsilon \\
     & =   (1-\alpha) OPT  -\varepsilon
\end{align*}
\end{proof}

\rvSamples*
\begin{proof}
We first observe that for each $\xi \in \Delta_{\Theta}$ we have:
\begin{equation*}
\textsc{Rev}(\mathcal{V}, \xi) \le \sum_{j=1}^{m} j \ (\lambda_{j}-\lambda_{j+1}) \le  m \sum_{j=1}^{m}  \ (\lambda_{j}-\lambda_{j+1})  = \lambda_{1} m
\end{equation*}
So that by Hoeffding bound we have: 
\begin{equation*}
Pr \Big( \hspace{0.3mm} \big|  \textsc{Rev}(\mathcal{V}_s, \xi)- \textsc{Rev}( \mathcal{V}, \xi) \hspace{0.3mm} \big | \le \tau/2 \Big ) \ge 1- 2 \hspace{0.2mm}e^{\frac{- s \tau^2}{2 (\lambda_{1} m)^2}} = 1-\rho
\end{equation*}
where the inequality is attained considering a number of samples $s = \frac{2 (\lambda_{1}m )^2}{ \tau^2} \log(\frac{2}{\rho})$. Moreover, by union bound and De Morgan's laws, we get:
\begin{equation*}
Pr \Big( \bigcap_{\xi \in \Xi} \Big\{ \hspace{0.3mm} \big| \textsc{Rev}(\mathcal{V}_s, \xi) - \textsc{Rev}(\mathcal{V}, \xi) \hspace{0.3mm} \big | \le \tau /2 \Big \} \Big ) \ge 1- \rho |\Xi|
\end{equation*}
Let $\gamma^*\in \Delta_{\Xi} $ be an optimal solution of the problem when the auctioneer can observe the actual receivers' valuation distribution while let $\gamma_{\mathcal{V}_s} \in \Delta_{\Xi}$ an optimal solution of the problem when the distribution is the empirical one. We observe that the expected revenue provided by $\gamma_{\mathcal{V}_s} \in \Delta_{\Xi}$ will be greater or equal to the one obtained with $\gamma^*\in \Delta_{\Xi} $ minus a fixed parameter with a probability of at least $1-\rho|\Xi|$. Formally it holds:
\begin{equation*}
\sum_{\xi \in \Xi} \gamma_{\mathcal{V}_s}(\xi) \textsc{Rev}( \mathcal{V}, \xi) \ge \sum_{\xi \in \Xi } \gamma_{\mathcal{V}_s}(\xi) \textsc{Rev}(\mathcal{V}_s, \xi) - \tau/2 \ge \sum_{\xi \in \Xi} \gamma^*(\xi) \textsc{Rev}(\mathcal{V}_s, \xi) - \tau/2 \ge \sum_{\xi \in \Xi} \gamma^*(\xi) \textsc{Rev}( \mathcal{V}, \xi) - \tau
\end{equation*}
Finally, we indicate with $\mathbb{E} [\sum_{\xi \in \Xi}\gamma_{\mathcal{V}_s}(\xi) \textsc{Rev}(\mathcal{V}, \xi) ]$ the expectation over the sampling procedure of the expected revenue archived by a solution of the LP considering the empirical receivers distribution. Finally, we have that:
\begin{align*}
\mathbb{E} [ \sum_{\xi \in \Xi}\gamma_{\mathcal{V}_s} (\xi) \textsc{Rev}(\mathcal{V}, \xi) ] & \ge \Big ( 1- \rho |\Xi| \Big ) \mathbb{E}\Big[ \sum_{\xi \in \Xi }   \gamma_{\mathcal{V}_s} (\xi) \textsc{Rev}( \mathcal{V}, \xi) \hspace{0.2mm} | \bigcap_{\xi \in \Xi} \Big\{ \hspace{0.3mm} \big| \textsc{Rev}( \mathcal{V}_s, \xi)- \textsc{Rev}(\mathcal{V}, \xi) \hspace{0.3mm} \big | \le \tau/2 \Big \} \Big ] \\
& \ge \Big ( 1- \rho |\Xi| \Big ) \sum_{\xi \in \Xi} \gamma^*(\xi) \textsc{Rev}( \mathcal{V}, \xi) -  \tau \\
& \ge \Big ( 1- \rho |\Xi| \Big ) OPT_{\Xi} -  \tau
\end{align*}
\end{proof}

\UnkownFixedKLem*
\begin{proof}
	
	We show that there exists a distribution $\gamma \in \Delta_{\Xi^q}$ over $q$-uniform posterior that provides an expected revenue that satisfies the conditions in the statement.
	For a $\xi \in \Delta_\Theta$, let $\xi^q \in \Xi^q$ be the empirical mean of $q$ vectors built form $q$ {i.i.d.} samples drawn from the given posterior $\xi$.
	In particular, each sample is obtained by randomly drawing a state of nature, with each state $\theta \in \Theta$ having probability $\xi(\theta)$ of being selected, and, then, a $d$-dimensional vector is built by letting all its components equal to $0$, except for that one corresponding to $\theta$, which is set to $1$.
	Notice that $\xi^q$ is a random vector supported on $q$-uniform posteriors, whose expected value is posterior $\xi$.
	Then, we let $\gamma_\xi \in \Delta_{\Xi^q}$ be such that, for every $\tilde \xi \in \Xi^q$, it holds $\gamma_{\xi}(\tilde \xi) = \textnormal{Pr} \left\{  \xi^q = \tilde \xi \right\}$.
	 In the following, given a $\xi \in \Delta_{\Theta}$ and a $j \in [m+1]$ we write $\xi^\top v_{i_j}$ without specifying that $\pi=(i_1,...,i_{m+1}) \in \Pi_{m+1}$ is the vector such that $\xi \in \Xi_\pi$. Then, by Hoeffding bound we have that:
	\begin{equation*}
	Pr \Big( \tilde{\xi}^\top v_{i_j} \ge \xi^\top v_{i_j} -  \eta \Big) \ge  1- \hspace{0.2mm}e^{-2q \eta^2 } = 1- \frac{ \eta}{ m+1} \hspace{2mm} \forall j \in \{1,...,m+1 \} 
	\end{equation*}
	where the equality follows from the definition of $q$.
	Thanks to union bound and De Morgan's laws we get:
	\begin{equation*}
	Pr \Big( \bigcap_{j=1}^{m+1} \hspace{0.2mm} \{ \tilde{\xi}^\top v_{i_j} \ge \xi^\top v_{i_j} - \varepsilon \} \Big ) \ge  1-\eta, 
	\end{equation*}
	Now, we prove that the revenue is $(0,2\eta m)$-stable over $\Xi_q$.

	\begin{align*}
	\mathbb{E}_{\tilde{\xi} \sim \gamma_{\xi}} \Big[ \textsc{Rev}(\mathcal{V}, \tilde{\xi} ) \hspace{0.2mm} \Big] & \ge \mathbb{E}_{\tilde{\xi} \sim \gamma_\xi} \Big[ \textsc{Rev}(\mathcal{V}, \tilde{\xi} ) \hspace{0.2mm} \big | \bigcap_{j=1}^{m+1} \hspace{0.2mm} \{ \tilde{\xi}^\top v_{i_j} \ge \xi^\top v_{i_j} -  \eta \}  \Big] Pr \Big( \bigcap_{j=1}^{m+1} \hspace{0.2mm} \{ \tilde{\xi}^\top v_{i_j} \ge \xi^\top v_{i_j} - \eta \} \Big ) \\
	& \ge (1-  \eta) \mathbb{E}_{\tilde{\xi} \sim \gamma_{\xi}} \Big[ \textsc{Rev}(\mathcal{V}, \tilde{\xi} ) \hspace{0.2mm} | \bigcap_{j=1}^{m+1} \hspace{0.2mm} \{ \tilde{\xi}^\top v_{i_j} \ge \xi^\top v_{i_j} - \eta \}  \Big] \\
	& \ge  (1-  \eta)   \Big(\sum_{j=1}^{m} \hspace{0.5mm} j\hspace{0.5mm} (\lambda_{j}-\lambda_{j+1}) (\xi^\top v_{i_{j+1}} - \eta  ) \Big) \\
	& \ge  (1-  \eta) \big( \textsc{Rev}(\mathcal{V}, \xi) - \eta m  \big) \\
	& \ge \textsc{Rev}(\mathcal{V}, \xi) - 2 \eta m  
	\end{align*}
	Proving that the revenue is $(0,2\eta m)$-stable over $\Xi_q$ with $q \ge \frac{1}{2 \eta^2}\log \big(\frac{m+1}{\eta }\big)$. Hence, by Lemma \ref{lem:rvFinite}, $OPT_{\Xi_q} \ge OPT - 2 \eta m $ proving the first point of the lemma.

	Now, we prove that the revenue is $((1-\frac{\eta}{\delta})^2, 0)$-stable over $\Xi_q$.
	In particular, it holds:
	\begin{align*}
	\mathbb{E}_{ \tilde{\xi} \sim \gamma_{\xi} } \Big[ \textsc{Rev}(\mathcal{V}, \tilde{\xi} ) \hspace{0.2mm} \Big] & \ge
	\mathbb{E}_{\tilde{\xi} \sim \gamma_{\xi} } \Big[ \textsc{Rev}(\mathcal{V}, \tilde{\xi} ) \hspace{0.2mm} \big | \bigcap_{j=1}^{m+1}  \hspace{0.2mm} \{ \tilde{\xi}^\top v_{i_j} \ge \xi^\top v_{i_j} -  \eta \}  \Big] Pr \Big( \bigcap_{j=1}^{m+1}   \hspace{0.2mm} \{ \tilde{\xi}^\top v_{i_j} \ge \xi^\top v_{i_j} - \eta \} \Big ) \\
	& \ge (1-  \eta) \mathbb{E}_{ \tilde{\xi} \sim \gamma_{\xi} } \Big[ \textsc{Rev}(\mathcal{V}, \tilde{\xi} ) \hspace{0.2mm} | \bigcap_{j=1}^{m+1}  \hspace{0.2mm} \{ \tilde{\xi}^\top v_{i_j} \ge \xi^\top v_{i_j} - \eta \}  \Big] \\
	& \ge (1- \eta )  \Big(   \sum_{j=1}^{m} \hspace{0.5mm} j\hspace{0.5mm} (\lambda_{j}-\lambda_{j+1}) (\xi^\top v_{i_{j+1}} - \frac{\eta }{\delta} \xi^\top v_{i_{j+1}} ) \Big) \\
	& \ge (1- \eta ) \Big(  \Big(1- \frac{\eta}{\delta} \Big)  \sum_{j=1}^{m} \hspace{0.5mm} j\hspace{0.5mm} (\lambda_{j}-\lambda_{j+1}) \xi^\top v_{i_{j+1}} \Big) \\
	& \ge \Big( 1- \frac{\eta}{\delta} \Big)^2 \textsc{Rev}(\mathcal{V}, \xi) 
	\end{align*}
	Proving that the revenue is $((1-\frac{\eta}{\delta})^2, 0)$-stable over $\Xi_q$ with $q \ge \frac{1}{2 \eta^2}\log \big(\frac{m+1}{\eta }\big)$. Thus, by Lemma \ref{lem:rvFinite}, $OPT_{\Xi_q} \ge (1-\frac{\eta}{\delta})^2 OPT $ proving the second point of the lemma.
\end{proof}

\UnkownFixedDLem*
\begin{proof}
First, we show that the revenue is a Lipschitz continuous function in the posterior probability $\xi \in \Delta_{\Theta}$ with respect to the infinity  norm. In particular, it holds:
\begin{equation*}
|\textsc{Rev}(\mathcal{V}, \xi)-\textsc{Rev}(\mathcal{V},\xi')| \le m d ||\xi-\xi'||_{\infty} \hspace{3mm}\forall \xi,\xi' \in \Delta_{\Theta}
\end{equation*}
This follows from the fact that $\textsc{Rev}(\mathcal{V},\xi)$ is a piecewise linear, continuous function and the partial derivative of $\textsc{Rev}(\mathcal{V}, \xi)$ with respect each component of $\xi$ is bounded almost every where by $m$.
 Then, we show that each posterior $\xi$ can be decomposed in a probability distribution $\gamma_\xi \in \Delta_{\Xi^q}$ with a small loss of revenue.
We define $I_{\lambda}(\xi)=\{ \xi' \in \Delta_{\Theta} \hspace{2mm}|\hspace{2mm} || \xi-\xi'||_{\infty} \le \lambda/md \}$ as the neighbourhood of a given posterior $\xi \in \Delta_{\Theta}$ and $\Xi(\xi)= I_{\lambda} (\xi)\cap \Xi_q $ its intersection with the set $\Xi_q$.
It is easy to see that $\xi \in co(\Xi(\xi))$. 
Hence, by Caratheodory's theorem we can decompose each $\xi$ as follow:
\begin{equation*}
    \sum_{\tilde{\xi} \in \Xi(\xi) } \gamma_\xi(\hspace{0.2mm}\tilde{\xi}\hspace{0.2mm})\tilde{\xi}(\theta) = \xi(\theta) \hspace{3mm} \forall \theta \in \Theta
\end{equation*}
with $\gamma_\xi \in \Delta_{\Xi(\xi)}$. We show now that such a decomposition will decrease the expected revenue under $\xi \in \Delta_{\Theta}$ of at most a fixed parameter. Formally, we have that:

\begin{align*}
   \mathbb{E}_{\tilde{ \xi} \sim \gamma_{\xi}} \Big[  \textsc{Rev}(\mathcal{V}, \tilde{\xi})  \Big] & =
     \sum_{\tilde{\xi} \in {\Xi(\xi)} } \gamma_\xi(\tilde{\xi}) \textsc{Rev}(\mathcal{V}, \tilde{\xi})  \\
   & \ge   \sum_{\tilde{\xi} \in \Xi(\xi)} \gamma_{\xi}(\tilde{\xi}) (\textsc{Rev}(\xi,\mathcal{V})-\lambda)   \\
   & = \textsc{Rev}(\mathcal{V}, \xi) - \lambda,
\end{align*}
where the inequality comes from the Lipschitz continuity of $\textsc{Rev}(\mathcal{V},\xi)$ and $|| \xi-\tilde \xi||_{\infty} \le \lambda/md $ for all $\tilde \xi \in \Xi(\xi)$.
This proves that the revenue is $(0 , \lambda)$-stable over $\Xi_q$. By lemma \ref{lem:rvFinite} we have that $OPT_{\Xi_q} \ge OPT-\lambda $
\end{proof}

\UnkownFixedDThm*
\begin{proof}
Let $\eta$ be the desired approximation and let $\tau$, $\alpha$, and $q$ be three suitable values defined in the following.
Applying Lemma \ref{lem:rvSamples} for $\Xi=\Xi_q$, $\rho=\alpha/m$, $\tau$ and $s=\frac{2 m^2}{ \tau^2} \log(\frac{2m}{\alpha})$, we get:
\begin{align*}
 \mathbb{E}  \Big[  \sum_{\xi \in \Xi_q} \gamma_{\mathcal{V}_s}(\xi) \textsc{Rev}(\mathcal{V}, \xi) \Big]
 & \ge \Big ( 1- \frac{\alpha |\Xi_q|}{m}  \Big ) OPT_{\Xi_q} -  \tau \\
 & \ge OPT_{\Xi_q} -  \tau - \alpha |\Xi_q|.
\end{align*}
 By Lemma \ref{lem:UnkownFixedDLem}, for a value $\lambda$ defined in the following and $q=\lceil \frac{md}{\lambda} \rceil$  we have that:
\begin{align*}
\mathbb{E} \Big [\sum_{\xi \in \Xi_q} \gamma_{\mathcal{V}_s}(\xi) \textsc{Rev}(\mathcal{V}, \xi) \Big ] & \ge OPT_{\Xi_q} -  \tau- \alpha |\Xi_q|\\
 & \ge OPT - \lambda - \tau- \alpha |\Xi_q| \\
 & = OPT - \eta,
\end{align*}
where the last equality holds taking $\lambda = \eta/3 $, $\varepsilon = \eta/6 $, $\alpha = \eta/(3|\Xi_q|)$ 
\end{proof}

\UnkownFixedKThm*
\begin{proof}
Let $\nu$ be the desired approximation and let $\tau$, $\alpha$, and $q$ be three suitable values defined in the following.
Applying Lemma \ref{lem:rvSamples} for $\Xi=\Xi_q$, $\rho=\alpha/m$, $\tau$ and $s=\frac{2 m^2}{\tau^2} \log\frac{2 m}{\alpha}$, we get:
\begin{align*}
 \mathbb{E}  \Big[  \sum_{\xi \in \Xi_q} \gamma_{\mathcal{V}_s}(\xi) \textsc{Rev}(\mathcal{V}, \xi) \Big] & \ge \Big ( 1- \frac{\alpha |\Xi_q|}{m}  \Big ) OPT_{\Xi_q} - \tau \\
 & \ge OPT_{\Xi_q} -  \tau - \alpha |\Xi_q|.
\end{align*}
 By Lemma \ref{lem:UnkownFixedKLem}, we will have that for a value $\eta$ defined in the following and $q=\frac{1}{2\eta^2}\log \frac{m+1}{\eta}$ it holds:
\begin{align*}
 \mathbb{E} \Big [\sum_{\xi \in \Xi_q} \gamma_{\mathcal{V}_s}(\xi) \textsc{Rev}(\mathcal{V}, \xi) \Big ]  & \ge  OPT_{\Xi_q}  -  \tau - \alpha |\Xi_q| \\
& \ge OPT - 2 \eta m -  \tau - \alpha |\Xi_q| \\
& = OPT - \nu
\end{align*}
Where the last equality holds for $\eta = \nu/6m $, $\tau = \nu/3 $, and $\alpha = \nu/3|\Xi_q|$. 
\end{proof}

\UnkownFixedKHigh*
\begin{proof}
Let $\beta$ be the desired approximation.
Moreover, let $\alpha$ and $\eta$ be values defined in the following, $\tau = \nu \delta \lambda_1 $, $q = \frac{1}{2 \eta^2}\log \big(\frac{m+1}{\eta }\big)$, and $s = \frac{2 m^2}{( \delta \nu)^2} \log(\frac{2}{\alpha})$.
By Lemma \ref{lem:rvSamples}, it holds
\begin{align*}
   \mathbb{E}\Big[ \sum_{\xi \in \Xi_q} \gamma_{\mathcal{V}_s}(\xi) \textsc{Rev}(\mathcal{V}, \xi) \hspace{0.2mm} \Big] 
   & \ge \Big ( 1- \alpha |\Xi_q| \Big ) OPT_{\Xi_q}  - \nu \delta \lambda_1  \\
   & \ge \Big (1 - \nu - \alpha |\Xi_q|\Big ) OPT_{\Xi_q} .
\end{align*}
By Lemma \ref{lem:UnkownFixedKLem} we have:
\begin{align*}
 \mathbb{E} \Big [\sum_{\xi \in \Xi_q} \gamma_{\mathcal{V}_s}(\xi) \textsc{Rev}( \mathcal{V}, \xi) \Big ]  & \ge \Big( 1- \nu - \alpha |\Xi_q| \Big) OPT_{\Xi_q}   \\ & \ge ( 1- \nu - \alpha |\Xi_q| ) \Big(1 - \frac{\eta}{\delta} \Big)^2 OPT \\ & = (1-\beta)OPT,
\end{align*}
where the last inequality holds for $\nu=\beta/4$, $\alpha=\frac{\beta}{4|\Xi_q|}$, and $\eta=\delta \beta/2$.
\end{proof}

\UnkownFixedKHighTight*
\begin{proof}
We reduce from public signaling in elections with a k-voting rule.
In particular, each receiver $i \in \N$ has an utility $u^i_\theta \in [-1,1]$ in a state $\theta$, where $u^i_\theta$ represent the difference between the utility of receiver $i$ in voting $c_0$ with respect to $c_1$. 
The sender's utility is $1$ if the at least $k$-voters vote for $c_0$, \emph{i.e.}, the induced posterior $\xi \in \Delta_\Theta$ is such that  $\sum _{\theta} \xi(\theta) u^i_\theta \ge0$. Otherwise, the sender's utility is $0$. See~[\citeauthor{castiglioni2020public},~\citeyear{castiglioni2020public}] for a more detailed description of the problem.
~\citeauthor{castiglioni2020public}~[\citeyear{castiglioni2020public}] show that assuming the ETH, there exists a constant $\epsilon>0$ such that distinguish between this two cases requires $n^{\tilde \Omega (log(n))}$ time:
\begin{itemize}
	\item there exists a signaling scheme such that in any induced posterior $\xi$ at least $k$ voters have $\sum_{\theta} \xi(\theta) u^r_\theta\ge 0$;
	\item in all the posteriors $\xi$ there are strictly less than k receiver with $\sum_{\theta} \xi(\theta) u^i_\theta \ge  -\epsilon $.
\end{itemize}

To prove the theorem, we show how to reduce the $k$-voting problem to our revenue maximization problem.
In particular, given an instance of $k$-voting, we build an instance of signaling with the same number of receivers.
The valuation of receiver $i$ in a state $\theta$ is $v_i(\theta)=\frac{u^i_\theta}{3}+\frac{2}{3}$,
Moreover, there are $m=k-1$ slots with $\lambda_{j}=1$ for each $j\in [m]$.
Finally, we set the required approximation $\omega=\epsilon/2$.

We show that when the first case holds, the revenue is at least $(k-1)\frac{2}{3}$, while in the second case it is strictly less than $(k-1)\frac{2-\epsilon}{3}$. Hence, a $\frac{2-\epsilon}{3}/\frac{2}{3}= 1-\epsilon/2=1-\omega$ approximation to the signaling problem can be used to provide an $\epsilon$ approximation to $k$-voting.
Since we provide a polynomial time reduction from k-voting to the revenue maximization problem, this is sufficient to prove the theorem.

\textbf{soundness.} Suppose that there exists a signaling scheme such that in any induced posterior $\xi$ at least $k$ voters have $\sum_{\theta} \xi(\theta) u^i_\theta\ge 0$. 
Consider the same signaling scheme in the revenue maximization problem. Then, in all the induced posteriors $\xi$ there are at least $k$ receivers with expected valuation at least $\frac{2}{3}$ and the total revenue is at least $(k-1)\frac{2}{3}$.

\textbf{completeness.} Suppose that in all the posteriors $\xi$ there are strictly less than k receiver with $\sum_{\theta} \xi(\theta) u^i_\theta \ge  -\epsilon$.
Notice that the revenue of a posterior is given by $(k-1)x$, where x is the k-th largest expected valuation. 
Hence, the maximum revenue is strictly less than $(k-1)\frac{2-\epsilon}{3}$.
\end{proof}

\end{document}